\documentclass[stylesheet]{stylesheet}
\bibliography{references.bib}

\begin{document}
\title{\bfseries Pricing of graph-based quantum portfolios from first principles}
\author[1,2,3]{Daniel J. Spencer}
\author[4]{Bin Cheng}
\author[4]{Dinh-Long Vu}
\author[1,5]{Kishor Bharti}
\author[1,2]{Alexey V. Gorshkov}
\author[4,6]{Patrick Rebentrost}

\affil[1]{Joint Center for Quantum Information and Computer Science, NIST/University of Maryland, College Park, MD 20742, USA}
\affil[2]{Joint Quantum Institute, NIST/University of Maryland, College Park, MD 20742, USA}
\affil[3]{Department of Physics, University of Maryland, College Park, MD 20742, USA}
\affil[4]{Centre for Quantum Technologies, National University of Singapore}
\affil[5]{Department of Computer Science and Institute for Advanced Computer Studies, University of Maryland, College Park, Maryland 20742, USA}
\affil[6]{Department of Computer Science, National University of Singapore}

\date{}
\maketitle

\begin{abstract}
    A central topic in economics and finance is the valuation of strategic games and bets, of investments providing uncertain future cashflows, and of other asset-like structures and contracts. Often, such valuation involves a probabilistic model, random variables, and computing expectation values of complex payoffs. Extending beyond this classical framework, a scenario emerges where assets are non-commuting operators and the inherent uncertainty is described by quantum states. In this work, we combine such notions of quantum finance with a graph-theoretic framework and ideas from contextuality. We discuss exclusive quantum assets and introduce scenarios when holding such assets can provide advantages over holding classical assets. In particular, we consider a "quantum market" based on nonlocal games such as the CHSH game and demonstrate that this setting can function as a market with well-defined financial structures. We provide upper bounds on the fair prices of the quantum assets based on graph-theoretic quantities like the independence number and the \lovasz theta number. Our insights into markets of non-commuting assets help to elucidate what a financial system in a world filled with quantum technology could look like.
\end{abstract}

\newpage
\tableofcontents
\newpage

\section{Introduction}\label{sec:intro}
As quantum computers approach the fault-tolerant regime~\cite{shor1996fault-tolerant}, it is important to consider how the computational environment will change with access to such devices. While this is a common discussion in the post-quantum cryptography community~\cite{bernstein2017post-quantum}, other fields able to take advantage of quantum computation should think ahead of time about what changes may arise with the widespread adoption of quantum technologies. One such field is finance, where many quantum algorithms have been proposed~\cite{orus2019quantum,bouland2020prospects,naik2025portfolio,herman2023quantum} for both near-term and fault-tolerant hardware to try to achieve quantum advantage over classical algorithms in routine financial calculations. Common applications of quantum computing in finance include portfolio optimization~\cite{rebentrost2024quantum,kerenidis2019quantum,dalzell2022end-to-end,yalovetzky2023solving,herman2023constrained,lim2024quantum}, derivative pricing~\cite{rebentrost2018quantum, stamatopoulos2020option,chakrabarti2021threshold,an2021quantum-accelerated,doriguello2022quantum,stamatopoulos2024derivative,wang2024option,herman2026quantum}, and risk analysis~\cite{woerner2019quantum,egger2021credit,alcazar2022quantum,leclerc2024financial}.

While these proposed algorithms seek quantum speedups over classical approaches by taking advantage of quantum \emph{computation}, there has been considerably less work on the potential quantum-\emph{information-theoretic} aspects of a financial system in a world in which quantum computers are ubiquitous. In particular, one can conceive of a financial system whose underlying states, assets, and/or payoff mechanisms are intrinsically quantum. There has been some initial work on this, most notably the work by Bao and Rebentrost~\cite{bao2023fundamental} and the work by Hughston and S\'anchez-Betancourt~\cite{hughston2024valuation}. Bao and Rebentrost~\cite{bao2023fundamental} unite ideas from quantum information and finance by introducing a \emph{quantum market} populated with \emph{quantum assets}, and formulate a \emph{Fundamental Theorem of Quantum Asset Pricing} in  analogy to the classical theorem~\cite{follmer2016stochastic,varian1987arbitrage}. Similarly, Hughston and S\'anchez-Betancourt~\cite{hughston2024valuation} take an approach where the payout of a \emph{classical} financial asset is determined by the measurement outcome of a quantum observable. There has also been some recent work on quantum trading~\cite{ding2025coordinating,hymas2026quenching} and game theory~\cite{khan2025quantum}, which have a similar flavor to Refs.~\cite{bao2023fundamental,hughston2024valuation}. We note that these references are distinct from earlier work that tried to explain classical financial markets using ideas from quantum mechanics~\cite{segal1998black-scholes,schaden2002quantum,melnyk2008quantum}.

In this work, we study hypothetical quantum markets through the lens of graph theory, nonlocal games, and quantum contextuality, while also drawing from well-established notions in classical finance. Contextuality is a feature of quantum mechanics that captures the relation between the physical reality of a system and the set of quantum observables used to probe that system. The use of contextuality here is natural because the exclusivity relations between various events translate directly to graph-theoretic constraints, and the resulting graph invariant quantities then determine the attainable values of classical and quantum portfolios. Concretely, we introduce a framework for exclusive quantum assets, demonstrate how and when quantum portfolios can outperform classical ones, and instantiate our framework in a market based on the CHSH game. This provides a market with a clear financial structure and a first-principles pricing mechanism to find fair prices of the exclusive quantum assets that we define. Thus, we not only extend prior definitions and results on quantum assets~\cite{bao2023fundamental,hughston2024valuation}, we also realize these notions in a market framework that yields a realistic price discovery mechanism.

The paper is organized as follows. In the remainder of this introductory section, we give an overview of our main contributions. Then, we provide the relevant background on graph theory and the (non-financial) CHSH game in~\cref{sec:preliminaries}. We describe the CHSH game in further detail and in a framework that we will use to develop our market game in~\cref{sec:chsh-details}. In~\cref{sec:assets}, inspired in part by the work done in~\cite{bao2023fundamental} and~\cite{hughston2024valuation}, we define the classical and quantum assets that will populate our quantum market. Then, in~\cref{sec:chsh-market-game}, we instantiate a quantum market game based on the CHSH game and the assets defined in~\cref{sec:assets}. Finally, we offer some concluding remarks and an outlook in~\cref{sec:discussion}.

\subsection{Main contributions}\label{subsec:main-contributions}
We highlight several novel contributions of our work. In the first part of the paper, we generalize the definitions of quantum assets and quantum Arrow securities, introduced in Refs.~\cite{bao2023fundamental} and~\cite{hughston2024valuation}, respectively. We contrast these with classical assets and classical Arrow securities and examine the properties of all four asset classes. We also introduce the notion of \emph{exclusivity} for these assets, which allows us to use tools from graph theory and contextuality to precisely characterize the assets.

The second part of the paper is a realization of a market that makes use of the quantum assets that we define. In particular, we instantiate such a quantum market via a nonlocal game, namely the CHSH game. While we limit the scope of our work to the CHSH game, we nevertheless provide a framework for other nonlocal games that can be used to realize a quantum market. The use of a specific nonlocal game as the basis of our quantum market also allows us to define specific, concrete quantum assets, so our definitions in the first part of the paper do not remain abstract. In the development of the quantum market based on the CHSH game, we also derive an intuitive pricing mechanism that allows us to discover the price of the quantum assets. Interestingly, this results in the natural emergence of a bid-ask spread, in analogy with the natural spread that arises in the pricing of classical assets in our current financial markets.

\section{Preliminaries}\label{sec:preliminaries}
We begin by introducing some background needed to understand the quantum assets and quantum market game that we develop in this work. We denote by the symbol $\oplus$ the logical \textsc{xor} between two bits and by the symbol $\wedge$ the logical \textsc{and} between two bits. We typically denote bits by lowercase italicized Latin letters such as $a$ and $x$. The Kronecker delta is denoted by $\delta(i,j)$, or, in analogy with random variables, $\delta_i(j)$ and is equal to 1 if $i = j$ and 0 if $i \neq j$. In probability theory~\cite{feller1991introduction}, given a sample space $\Omega$, its elements are called \emph{outcomes} and a subset of $\Omega$ is called an \emph{event}. Sometimes, we simply call $\omega \in \Omega$ an event, which should be understood as the event corresponding to $\{\omega\} \subseteq \Omega$. A random variable is a function $f: \Omega \to \RR$. Given an event $\Omega' \subseteq \Omega$, its indicator function $\mathds{1}_{\Omega'}: \Omega \to \{0,1\}$ is also a random variable, where $\mathds{1}_{\Omega'}(\omega) = 1$ if $\omega \in \Omega'$ and $0$ otherwise.

Given a Hilbert space $\calH$, we denote by $\ket{\psi} \in \calH$ a pure state, which is a complex-valued, $\ell_2$-normalized vector. We denote by $\rho$ a quantum state, where the set of all such mixed states is denoted by $\mathcal{D}(\mathcal{H})$ and consists of all positive semidefinite, Hermitian operators over $\mathcal{H}$ with unit trace. We often call $\rho$ a \emph{density matrix} or a \emph{density operator}. In the rest of this section, we give an overview of the relevant notions from graph theory (\cref{subsec:graph-theory}), followed by a review of the CHSH game (\cref{subsec:chsh-game}) as a motivating example for our quantum market game.

\subsection{Graph theory}\label{subsec:graph-theory}
Here, we review the basic graph-theoretic results relevant for our work. For a detailed exposition, we refer the reader to Refs.~\cite{knuth1993sandwich,grotschel2012geometric}. We denote by $G = (V,E)$ a graph with vertex set $V$ and edge set $E$, where each edge is an unordered pair of distinct vertices. Two vertices $u,v \in V$ are called \emph{adjacent} if $(u,v) \in E$.
The complement of $G$, denoted by $\overline{G} = (V, \overline{E})$, has the same vertex set as $G$ and edge set $\overline{E} = \{(u,v): u,v \in V,\ u \neq v,\ (u,v) \notin E\}$. A \emph{clique} is an induced complete subgraph of $G$. A useful type of graph is a \emph{cycle graph}:

\begin{definition}[Cycle graph]\label{def:cylic-graph}
    A graph with $n$ vertices such that every  vertex $i\in \{0,1, \cdots, n-1\}$ of the graph is connected to the $[(i+1)\pmod{n}]\numth$ and $[(i-1)\pmod{n}]\numth$ vertices is called a \emph{cycle graph} and is denoted by $C_n$.
\end{definition}

If $n$ is odd, then we refer to the cycle graph or subgraph as an \emph{odd cycle}, and analogously an \emph{even cycle} if $n$ is even. The generalization of a cycle graph is a \emph{circulant graph}:

\begin{definition}[Circulant graph]\label{def:circulant-graph}
    Given a set of integers $S \subseteq \{1, \dots, \lfloor n/2 \rfloor\}$, a graph with $n$ vertices where each vertex $i \in \{0,1, \cdots, n-1\}$ is connected to the $[(i+s) \pmod{n}] \numth$ and $[(i-s) \pmod{n}] \numth$ vertices for all $s \in S$ is called a \emph{circulant graph} and is denoted by $\mathrm{Ci}_n(S)$, or $\mathrm{Ci}_n(s_1, \dots, s_k)$, where $S = \{s_1, \dots, s_k\}$.
\end{definition}

\begin{figure}[tb]
    \centering
    \begin{tikzpicture}[
        vertex/.style={
            circle,
            draw,
            fill=white,
            inner sep=1.5pt,
            minimum size=6mm
        }
    ]
    \begin{scope}[xshift=0cm]
        \foreach \i in {0,...,6} {
            \node[vertex] (a\i)
                at ({90-360*\i/7}:2cm) {$\i$};
        }
        \foreach \i in {0,...,6} {
        \pgfmathtruncatemacro{\j}{mod(\i+1,7)}
        \draw (a\i) -- (a\j);
        }

        \node at (0,-2.7) {$\operatorname{Ci}_7(1)$};
    \end{scope}
    
    \begin{scope}[xshift=6cm]
        \foreach \i in {0,...,6} {
            \node[vertex] (b\i)
                at ({90-360*\i/7}:2cm) {$\i$};
        }
    
        \foreach \i in {0,...,6} {
            \pgfmathtruncatemacro{\j}{mod(\i+1,7)}
            \draw (b\i) -- (b\j);
        }
    
        \foreach \i in {0,...,6} {
            \pgfmathtruncatemacro{\j}{mod(\i+2,7)}
            \draw (b\i) -- (b\j);
        }
    
        \node at (0,-2.7) {$\operatorname{Ci}_7(1,2)$};
    \end{scope}
    \end{tikzpicture}
    \caption{Two simple circulant graphs for graphs with $n = 7$ vertices, $\mathrm{Ci}_7(1)$ and $\mathrm{Ci}_7(1,2)$. In $\mathrm{Ci}_7(1)$, each vertex $i$ is connected to vertices $i+1 \pmod{7}$, while in $\mathrm{Ci}_7(1,2)$, each vertex $i$ is connected to both vertices $i+1 \mod{7}$ and $i+2 \pmod{7}$.}
    \label{fig:simple-circulant-graph}
\end{figure}

A cycle graph can thus be denoted as $C_n = \mathrm{Ci}_n(1)$. We present two simple circulant graphs, $\mathrm{Ci}_7(1)$ and $\mathrm{Ci}_7(1,2)$, in~\cref{fig:simple-circulant-graph}. Two important graph quantities that we work with throughout this work are the \emph{independence number} and the \emph{\lovasz theta number}. To understand the independence number, we first need the notion of an independent set:

\begin{definition}[Independent set]\label{def:independent-set}
    Let $G = (V,E)$ be a graph. A subset $S \subseteq V$ is called an
    \emph{independent set} if no two distinct vertices in $S$ are adjacent;
    equivalently, for all distinct $i,j \in S$, we have $(i,j) \notin E$.
\end{definition}

If we assign positive weights $\vb{w} = (w_1, \cdots, w_d)^T \in \RR_+^d$ to the vertices of a graph $G$, where $d = \abs{V}$, then we can define the \emph{weighted independence number} and the \emph{weighted \lovasz theta number}~\cite{knuth1993sandwich}. The weighted independence number is defined as:

\begin{definition}[Weighted independence number]\label{def:weighted-independence-number}
    Denote by $\calS$ the set of all independent sets of a graph $G$. Given a positive weight vector $\vb{w} \in \RR_+^d$, the \emph{weighted independence number} of $G$ is defined as
    \begin{equation}\label{eq:ind_number_def_1}
        \alpha(G, \vb{w}) \coloneqq \max_{S \in \calS} \sum_{i \in S} w_i.
    \end{equation}
\end{definition}

The usual (unweighted) \emph{independence number} is recovered by assigning unit weight to every vertex, that is,
\begin{equation}
    \alpha(G) \coloneqq \alpha(G,\vb{1}) = \max_{S \in \calS}\abs{S},
\end{equation}
where $\vb{1}$ denotes the all-ones vector. The independence number is \NP-hard to compute, but it can be upper-bounded by the \lovasz theta number. To define the \lovasz theta number, we first need the definition of an orthonormal representation of a graph:

\begin{definition}[Orthonormal representation]\label{def:orthonormal-representation}
    An \emph{orthonormal representation} of a graph $G = (V, E)$ is a family of unit vectors $\{\ket{v_i}\}_{i \in V}$ in a Hilbert space $\mathcal{H}$ such that for any two distinct nonadjacent vertices $i, j \in V$ (i.e., $(i,j) \notin E$), their corresponding vectors are orthogonal: $\braket{v_i}{v_j} = 0$.
\end{definition}

Then, the weighted \lovasz theta number is defined as follows:

\begin{definition}[Weighted \lovasz theta number]\label{lovasz-theta-number}
    The \emph{\lovasz theta number} $\vartheta(G, \vb{w})$ of a graph $G = (V, E)$ with vertex weights $\vb{w} \in \RR_+^{|V|}$ is defined as
    \begin{align}
        \vartheta(G, \vb{w}) = \max_{\ket{\phi}, \{\ket{v_i}\}}{\defsum{i}{}{w_i \abs{\braket{\phi}{v_i}}^2}},
    \end{align}
    where $\ket{\phi}$ is a unit vector and $\{\ket{v_i}\}$ is an orthonormal representation of a graph complement $\overline{G}$. The vector $\ket{\phi}$ is commonly called the \emph{handle}.
\end{definition}

The usual (unweighted) \emph{\lovasz theta number} is obtained by assigning unit weight to every vertex, that is,
\begin{equation}
    \vartheta(G) \coloneqq \vartheta(G,\vb{1}).
\end{equation}
Equivalently,
\begin{align}
    \vartheta(G) = \max_{\ket{\phi},\{\ket{v_i}\}}\sum_{i\in V} \abs{\braket{\phi}{v_i}}^2.
\end{align}
The \lovasz theta number can be found using semidefinite programming:

\begin{fact}[Finding the \lovasz theta number]\label{fact:finding-lovasz-theta-number}
    For an undirected graph $\ensuremath{G = (V,E)}$ with vertex weights $\vb{w} \in \RR_+^{d}$ and denoting $d = \abs{V}$, the weighted \lovasz theta number can be found by solving the following semidefinite program:
    \begin{equation}\label{theta:primal}
    \begin{aligned} 
        \vartheta(G, \vb{w}) = \max_{X} & \  \sum_{i=1}^d w_i { X}_{ii}\\
        {\rm  s.t.} & \ {X}_{ii} = {X}_{0i},\ \forall i \in [d],\\
        & \ {X}_{ij} = 0,\ \forall (i,j) \in E,\\
        & \ X_{00} = 1,\\
        & \ X \in \mathbb{S}^{d+1}_+,
    \end{aligned}
    \end{equation}
    where $\mathbb{S}^{d+1}_+$ denotes positive semidefinite matrices of size $(d+1) \times (d+1)$ and where $[d] \coloneqq \{1, \ldots, d\}$. 
\end{fact}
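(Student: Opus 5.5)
We prove the equality by showing that the semidefinite program~\eqref{theta:primal} and the vector formulation in the definition of $\vartheta(G,\vb{w})$ have the same optimum. We show two inequalities: every feasible choice of handle and orthonormal representation gives a feasible $X$ with the same objective value, and every feasible $X$ gives such a choice.

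\emph{From vectors to a feasible $X$.} Fix a handle $\ket{\phi}$ and an orthonormal representation $\{\ket{v_i}\}$ of $\overline{G}$. Recall that in $\overline{G}$ the nonadjacent pairs are exactly the edges of $G$, so $\braket{v_i}{v_j}=0$ whenever $(i,j)\in E$. We may multiply each $\ket{v_i}$ by a phase without changing the objective. We choose the phases so that $\braket{\phi}{v_i}\ge 0$ for every $i$. Define $\ket{u_0}=\ket{\phi}$ and $\ket{u_i}=\braket{v_i}{\phi}\,\ket{v_i}$, which is the projection of the handle onto the line spanned by $\ket{v_i}$. Let $X_{ab}=\braket{u_a}{u_b}$ be the Gram matrix. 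Then $X\succeq 0$ and $X_{00}=1$. We also have $X_{ii}=|\braket{\phi}{v_i}|^2$ and $X_{0i}=\braket{\phi}{v_i}\braket{v_i}{\phi}=|\braket{\phi}{v_i}|^2$, so $X_{ii}=X_{0i}$. For $(i,j)\in E$, $X_{ij}=0$. The objective $\sum_i w_iX_{ii}$ equals $\sum_i w_i|\braket{\phi}{v_i}|^2$. The program as stated uses real symmetric matrices, so we either pass to the real part of $X$, which is still PSD and satisfies all the linear constraints (they are real), or we note that the vectors can be taken real without loss of generality.

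\emph{From a feasible $X$ to vectors.} Given a feasible $X$, factor it as a Gram matrix, $X_{ab}=\braket{u_a}{u_b}$ with $\ket{u_0},\dots,\ket{u_d}$ real vectors. Since $X_{00}=1$, the vector $\ket{u_0}$ is a unit vector; set $\ket{\phi}=\ket{u_0}$. For $i$ with $\ket{u_i}\neq 0$, set $\ket{v_i}=\ket{u_i}/\|u_i\|$. The constraint $X_{ii}=X_{0i}$ gives $\braket{\phi}{u_i}=\|u_i\|^2$, so $|\braket{\phi}{v_i}|^2=\|u_i\|^4/\|u_i\|^2=X_{ii}$. If $\ket{u_i}=0$, we need a unit vector $\ket{v_i}$ orthogonal to $\ket{\phi}$ and to all other $\ket{v_j}$. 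We obtain it by adjoining a fresh orthonormal direction to the ambient space, which contributes $0=X_{ii}$ to the objective. For $(i,j)\in E$, $X_{ij}=0$ gives $\braket{v_i}{v_j}=0$. Hence $\{\ket{v_i}\}$ is an orthonormal representation of $\overline{G}$ and the objective values agree.

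The only delicate points are the bookkeeping in these two directions: the phase and realness convention in the first step, and the handling of zero vectors by embedding into a larger space in the second. Once these are settled, the two inequalities give the equality, and attainment of the maximum follows from compactness of the feasible set: $X\succeq 0$ with $X_{00}=1$ and $X_{ii}=X_{0i}\le\sqrt{X_{ii}}$ forces $X_{ii}\le 1$, which bounds every entry.
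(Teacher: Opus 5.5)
Your argument is correct and complete. The paper gives no proof of this statement. It is quoted as a standard Fact from the cited literature (Knuth's sandwich theorem, Gr\"otschel--Lov\'asz--Schrijver), so the comparison is with that literature rather than with an in-paper proof.

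There, the handle formulation is usually identified with the other characterizations of $\vartheta(G,\vb{w})$ through a cycle of inequalities involving min--max (dual) formulations. You instead give a direct, duality-free correspondence:
\begin{itemize}
\item In one direction, $\ket{u_i}=\ket{v_i}\braket{v_i}{\phi}$, the projection of the handle onto each representing line.
\item In the other, Gram factorization plus normalization, with fresh orthogonal directions for the zero vectors.
\end{itemize}
This shows the two problems have exactly the same set of attainable objective values. Compactness of the SDP feasible set then gives attainment on both sides. Your route buys elementarity and transparency. It does not give equivalence with the other formulations, for instance the trace-normalized one the paper later uses in the proof of Theorem~\ref{thm:max-quantum-portfolio-value}.

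Two minor points:
\begin{itemize}
\item The phase normalization $\braket{\phi}{v_i}\ge 0$ is superfluous. The vector $\ket{v_i}\braket{v_i}{\phi}$ is unchanged under rephasing of $\ket{v_i}$, so $X_{0i}=X_{ii}$ holds for any phases.
\item The remark that the vectors ``can be taken real without loss of generality'' is a consequence of your real-part argument combined with your second direction, not an independent justification. Rely on the real-part argument itself. It is valid because, for real $y$, $y^{T}(\mathrm{Re}\,X)\,y=y^{\dagger}Xy\ge 0$ (the imaginary part of a Hermitian matrix is antisymmetric), and all constraints of the program are real linear equalities.
\end{itemize}
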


Since finding the independence number is an \NP-hard problem, and there exist polynomial-time classical algorithms for semidefinite programs, we can efficiently upper bound the independence number by calculating the \lovasz theta number. For cycle graphs $C_n$ on $n$ vertices, where $n \geq 5$ and $n$ is odd, we have the following exact relations:
\begin{align}
    \alpha(C_n) &= \frac{n-1}{2},\\
    \vartheta(C_n) &= \frac{n\cos{\frac{\pi}{n}}}{1 + \cos{\frac{\pi}{n}}}.
\end{align}

\subsection{CHSH game}\label{subsec:chsh-game}
We now review the CHSH game~\cite{clauser1969proposed}, into which we later instantiate our quantum market. In the CHSH game, Alice and Bob are two players that interact with a neutral referee who administers the game. The referee sends Alice and Bob two random bits $x, y \in \{0,1\}$ and Alice and Bob must respond (without communicating) with bits $a, b \in \{0,1\}$. Alice and Bob win the game if the \emph{predicate} $a \oplus b = x \wedge y$ is satisfied. We are interested in the winning probability when Alice and Bob use a classical strategy and the winning probability when they use a quantum strategy. We will discuss this winning probability in both the \emph{correlation picture} (here) and the \emph{exclusivity graph picture} (in~\cref{sec:chsh-details}).

In the correlation picture, the winning probability is determined by the correlation between Alice and Bob. First, consider the randomness over Alice and Bob's predetermined strategy and denote by $p(a,b \mid x,y)$ the probability that Alice returns $a$ when she is given $x$ from the referee and Bob returns $b$ when he is given $y$ from the referee. Second, consider the randomness of the inputs $x, y$ and denote the probability distribution of the input by $\pi(x,y)$, where $\pi(x,y) = 1/4$ since the referee samples the values of $x$ and $y$ uniformly at random. The probability  of winning the game is given by 
\begin{equation}
    p_\mathrm{win} = \defsum{a \oplus b = x \wedge y}{}{p(a,b \mid x,y) \pi(x,y)}.\label{eq:pwinfirst}
\end{equation}
One can similarly obtain $p_{\mathrm{fail}}$ as the probability of \emph{not} winning the game, where the summation will be over the cases $a \oplus b \neq x \wedge y$. Obviously, $p_{\mathrm{fail}} = 1 - p_{\mathrm{win}}$.

Consider the random variable $(-1)^{a \oplus b}$, which is $1$ if the bits are the same and $-1$ if they are not. The correlation between the output bits given inputs $x,y$ is measured by the conditional expectation value 
\begin{equation}
    E(x, y) \coloneqq \EE[(-1)^{a \oplus b}|x,y] =  \defsum{a,b \in \{0,1\}}{}{p(a,b \mid x,y)(-1)^{a \oplus b}}.\label{eq:condcorr}
\end{equation}
Consider the random variable $(-1)^{x \wedge y}$, for which a high correlation to $(-1)^{a \oplus b}$ indicates that the predicate is more likely to be satisfied. When the predicate is satisfied, $(-1)^{a \oplus b + x \wedge y} = 1$; otherwise, it equals $-1$. The total correlation is given by
\begin{align}
    \frac{S}{4} &\coloneqq \EE[(-1)^{a \oplus b + x \wedge y}]\\
    &= \defsum{x,y}{}{ \pi(x, y) (-1)^{x \wedge y} E(x, y)}\\
    &= \frac{1}{4} (E(0,0) + E(0,1) + E(1,0) - E(1,1)).\label{eq:correlation_CHSH}
\end{align}
The relation between the winning probability and the total correlation is given by  
\begin{equation}
    \frac{S}{4} = p_{\mathrm{win}} - p_{\mathrm{fail}} = 2p_{\mathrm{win}} - 1,
\end{equation}
which follows directly from
\begin{equation}
    \EE[(-1)^{a \oplus b + x \wedge y}] = \defsum{x,y,a,b}{}{p(a,b \mid x,y) \pi(x,y)\left ( \mathbb I_{\{a \oplus b = x \wedge y\}} - \mathbb I_{\{a \oplus b \neq x \wedge y\}} \right )}.
\end{equation}
In summary, the total correlation $S$ is linearly related to the winning probability $p_{\mathrm{win}}$. Later, we will use the exclusivity graph picture to characterize the winning probability, which is in some sense equivalent to the correlation picture associated with the total correlation. 

\paragraph{Classical strategy.}
The set of classical strategies consists of  deterministic and randomized strategies. For example, a simple deterministic strategy is one in which Alice always outputs $a = 0$ and Bob always outputs $b = 0$. The conditional probability $p(a,b \mid x,y)$ in this case is $p(0,0 \mid x,y) = 1$ for all values of $x$ and $y$ and 0 otherwise. In this case, the winning probability is $p_{\mathrm{win}} = 0.75$, corresponding to $S = 2$. 

For more general classical strategies, Alice and Bob can use shared (classical) randomness to determine their outputs, corresponding to the hidden variable model~\cite{einstein1935can,bell2004speakable}. We will use the exclusivity framework to perform the standard analysis for the winning probability of randomized classical strategies below in~\cref{subsec:exclusivity-graph-chsh}. It is well known that using a randomized strategy provides no advantage over an optimal deterministic strategy for the CHSH game, and so the best classical strategy can be thought of as the deterministic one described in the previous paragraph.

\paragraph{Quantum strategy.}
For the quantum strategy, Alice and Bob share an entangled quantum state and agree to measure some local observables according to the inputs $x$ and $y$ that they receive from the referee. Their output bits $a$ and $b$ are determined by the measurement outcomes on these observables. For the optimal quantum strategy, Alice and Bob share a maximally entangled Bell state:
\begin{align}
    \ket{\Phi^+} = \frac{1}{\sqrt{2}}\lp \ket{00} + \ket{11} \rp.
\end{align}
Alice's local observable is taken to be $A_0 = Z$ if she receives $x = 0$ and $A_1 = X$ if she receives $x = 1$. As for Bob's local observables, we allow them to be parameterized: $B_0(\theta) = \cos(2\theta)Z + \sin(2\theta)X$ for $y = 0$ and $B_1(\theta) = \cos(2\theta)Z - \sin(2\theta)X$ for $y = 1$. Note that all these observables have eigenvalues $\pm 1$.

Let $\Pi_{a|x}^A$ be the projector of the $(-1)^a$ eigenstate of $A_x$, and let $\Pi_{b|y}^B$ be the projector of the $(-1)^b$ eigenstate of $B_y(\theta)$. For example, we have $\Pi_{0|0}^A = \ketbra{0}$ and $\Pi_{0|0}^B = \ketbra{b_{0, +}}$, where $\ket{b_{0, +}} = \cos{\theta} \ket{0} + \sin{\theta} \ket{1}$. Then, the conditional probability that determines Alice and Bob's outputs is given by
\begin{equation}
\label{eq:q_condp}
    p(a,b \mid x,y) = \mel{\Phi^+}{\Pi_{a|x}^A \otimes \Pi_{b|y}^B}{\Phi^+}.
\end{equation}
The winning probability is again given by~\cref{eq:pwinfirst}. The different scenarios for the various inputs are given as follows:
\begin{itemize}
    \item When the input is $x = y = 0$, Bob measures $B_0(\theta)$, that is, he measures in the basis $\{\ket{b_{0,+}},\ket{b_{0,-}}\}$. The conditional winning probability is $p(0,0 \mid 0,0) + p(1,1 \mid 0,0) = \abs{\braket{0,b_{0,+}}{\Phi^+}}^2 + \abs{\braket{1,b_{0,-}}{\Phi^+}}^2 = \cos^2(\theta)$.
    \item When the input is $x = y = 1$, Bob measures $B_1(\theta)$, that is, he measures in the basis $\{\ket{b_{1,+}},\ket{b_{1,-}}\}$. The conditional winning probability is $p(0,1 \mid 1,1) + p(1,0 \mid 1,1) = \abs{\braket{+,b_{1,-}}{\Phi^+}}^2 + \abs{\braket{-,b_{1, +}}{\Phi^+}}^2 = \frac{1}{2} + \cos(\theta)\sin(\theta)$.
    \item When the input is $x=1$ and $y=0$, Bob measures $B_0(\theta)$, that is, he measures in the basis $\{\ket{b_{0,+}},\ket{b_{0,-}}\}$. The conditional winning probability is $p(0,0 \mid 1,0) + p(1,1 \mid 1,0) = \abs{\braket{+, b_{0,+}}{\Phi^+}}^2 + \abs{\braket{-,b_{0,-}}{\Phi^+}}^2 = \frac{1}{2} + \cos(\theta)\sin (\theta)$.
    \item When the input is $x=0$ and $y=1$, Bob measures $B_1(\theta)$, that is, he measures in the basis $\{\ket{b_{1,+}},\ket{b_{1,-}}\}$. The conditional winning probability is $p(0,0 \mid 0,1) + p(1,1 \mid 0,1) = \abs{\braket{0, b_{1,+}}{\Phi^+}}^2 + \abs{\braket{1,b_{1,-}}{\Phi^+}}^2 = \cos^2(\theta)$.
\end{itemize}
Combining all of these, the total winning probability as a function of $\theta$ is
\begin{equation}
    p_{\mathrm{win}}(\theta) = \frac{1}{2}\lp \cos^2(\theta) + \cos(\theta)\sin(\theta) + \frac{1}{2} \rp,
\end{equation}
which peaks at $\theta = \pi/8$, yielding $p_{\mathrm{win}}(\pi/8) \approx 0.854$. The parameterized quantum strategy attains the optimal classical winning probability, $p_\mathrm{win} = 3/4$, for the following values of $\theta$:
\begin{equation}
    \theta = \pi n, \pi(n + 1/4),\qquad n \in \ZZ.
\end{equation}
An interval for quantum advantage is thus $\theta \in (0, \pi/4)$, where the boundaries are excluded; see~\cref{fig:success-probability}. In terms of the  correlation, notice the interesting re-expression of the conditional correlation 
\begin{align}
    E(x,y) = \defsum{a,b}{}{\mel{\Phi^+}{\Pi_{a|x}^A \otimes \Pi_{b|y}^B}{\Phi^+} (-1)^{a + b}} = \mel{\Phi^+}{A_x \otimes B_y}{\Phi^+} ,
\end{align}
where we used the fact that $A_x = \Pi_{0|x}^A - \Pi_{1|x}^A$ and $B_y = \Pi_{0|y}^B - \Pi_{1|y}^B$.
When $\theta = \pi/8$, one can show by direct calculation that
\begin{equation}
    E(x,y) = 
    \begin{cases}
        -\frac{1}{\sqrt{2}} & \text{if $x=y=1$}\\
        \frac{1}{\sqrt{2}} & \text{otherwise}.
    \end{cases}
\end{equation}
Thus, for the optimal quantum strategy, $S = 2\sqrt{2}$, corresponding to $p_{\mathrm{win}} \approx 0.854$, which is greater than the optimal classical strategy's winning probability of 0.75. This discussion highlights the quantum-classical separation. The optimality of this value can be derived from Tsirelson’s bound, but the exclusivity graph framework~\cite{cabello2014graph} provides an alternative perspective, which we present below in~\cref{subsec:exclusivity-graph-chsh}.

\begin{figure}[tb]
    \centering
    \includegraphics[width=0.8\textwidth]{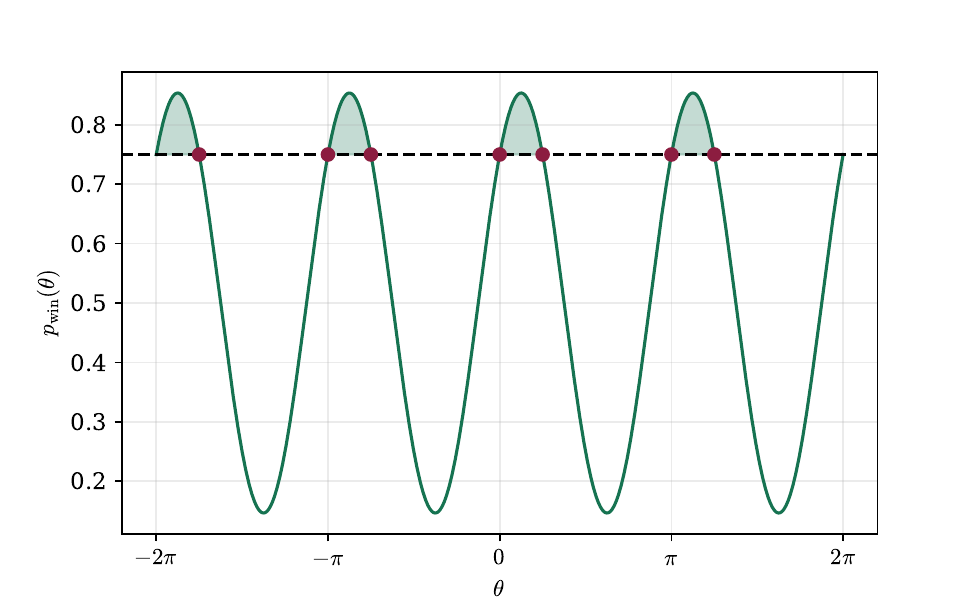}
    \caption{Success probability as a function of $\theta \in (-2\pi, 2\pi)$ for the parameterized projective measurement (i.e., quantum) strategy (green curve) and the classical strategy (black, dashed line), with crossover points indicated in dark red. The areas where the quantum strategy outperforms the classical strategy are shaded in green.}
    \label{fig:success-probability}
\end{figure}

\section{Probabilistic and exclusivity views on CHSH}\label{sec:chsh-details}
In~\cref{subsec:chsh-game}, we reviewed the standard analysis of the CHSH game and the resulting quantum-classical separation. In this section, we study the CHSH game in two alternative ways, a probabilistic view and an exclusivity graph framework. We begin with the probabilistic view.

\subsection{Probabilistic view}\label{subsec:probability-theory-view}
Let us revisit the CHSH game in terms of strategies and random variables. In general, classical strategies are functions that may use some random bits to produce the output bits when given the input bits. When the random bits are fixed, a strategy $f = (f^A, f^B)$ is fully specified by its output for both parties $(f^A_0,f^A_1,f^B_0,f^B_1)$. Hence, a strategy is fully specified by an element of the set $\curly*{0,1}^4$. To model probabilistic strategies, we take a classical strategy (either deterministic or randomized) in the CHSH game to be a probability distribution over the set (sample space)
\begin{equation}
    \Omega = \curly*{\omega : \omega =  (\omega^A_0,\omega^A_1,\omega^B_0,\omega^B_1)\in \curly*{0,1}^4},\label{eq:def_strategyspace}
\end{equation}
where $\omega^A_x$ and $\omega^B_y$ are the outputs of Alice and Bob given input $(x,y) \in \curly*{0,1}^2$, respectively. The optimal deterministic strategy that we discussed before is the Kronecker delta that assigns a probability $1$ to the element $(0,0,0,0)$ of $\Omega$ and $0$ to all other elements. On the other extreme, the completely random strategy corresponds to the uniform distribution over $\Omega$.

For each fixed pair of input-output $\curly*{(x,y),(a,b)}$, we define a random variable $A_{(a,b|x,y)}^\mathrm{cl}: \Omega \to \curly*{0,1}$ as follows. Given a strategy $\omega \in \Omega$, we set $A_{(a,b|x,y)}^\mathrm{cl}(\omega) = 1$ if the input-output pair satisfies the predicate and is compatible with the rule of the strategy, otherwise we set it to $0$. That is, we have 
\begin{align}
    A_{(a,b|x,y)}^\mathrm{cl}(\omega) = \delta(\omega^A_x,a)\delta(\omega^B_y,b)\delta(a\oplus b,x\wedge y).\label{eqn:exclusive-random-variable}
\end{align}
In total, there are 16 of these random variables but since 8 of them are zero by violation of the predicate, we focus our attention only on the 8 that do not violate the predicate. For concreteness, we examine several examples. Consider the input-output pair $(0,0 \mid 0,0)$, where, again, the notation is $(a,b \mid x,y)$ for output $a,b \in \curly*{0,1}$ given input $x,y \in \curly*{0,1}$. There are four strategies in $\Omega$ for which the rule is compatible with this input-output configuration: $(0,0,0,0),(0,0,0,1),(0,1,0,0),(0,1,0,1)$. These are all strategies in which Alice sends $0$ if she receives $0$ and similarly for Bob. In probabilistic language, if one of these events occurs, then the random variable $A^\mathrm{cl}_{(0,0|0,0)}$ is assigned a value of $1$. Equivalently, the random variable $A^\mathrm{cl}_{(0,0|0,0)}$ can be expressed as a sum of four Kronecker delta random variables corresponding to the above events. In general, we can write
 \begin{align}
   \label{eq:Adelta}  A^\mathrm{cl}_{(a,b|x,y)}(\omega') = \defsum{\omega|\omega^A_x=a,\omega^B_y=b}{}{\delta(\omega,\omega')}.
 \end{align}
We again consider the expectation value of these random variables. Given a probability distribution $p$ over $\Omega$ (think of this as a mixed strategy, where with probability $p(\omega)$ we select strategy $\omega$) and a random variable $A^\mathrm{cl}_{(a,b \mid x,y)}$, its expectation is the total probability of all deterministic classical strategies that are compatible with the input-output pair $\curly*{(x,y),(a,b)}$:
\begin{align}
    \EE\bracks*{A^\mathrm{cl}_{(a,b \mid x,y)}} = \defsum{\omega|\omega^A_x=a,\omega^B_y=b}{}{p(\omega)}.
\end{align}
This is exactly the probability that Alice returns $a$ if she receives $x$ and Bob returns $b$ if he receives $y$, that is,
\begin{equation}
    \EE\bracks*{A^\mathrm{cl}_{(a,b \mid x,y)}} = p(a, b \mid x, y).\label{eq:cl_condp}
\end{equation}
Then, substituting this into~\cref{eq:pwinfirst}, the winning probability is given as
\begin{equation}
    p_\mathrm{win} = \defsum{a,b,x,y}{}{\pi(x,y) \EE\bracks*{A^\mathrm{cl}_{(a,b|x,y)}}}.
\end{equation}

\subsection{Exclusivity graph framework}\label{subsec:exclusivity-graph-chsh}
We continue this section by discussing the notion of exclusivity among these random variables in the context of the CHSH game. Intuitively, two events are \emph{exclusive} if they cannot happen simultaneously. For example, in the financial context, two exclusive events could be the price of a bond going either up or down at a certain time; naturally, both events cannot occur simultaneously. Formally, we have the following definition~\cite{dhaene1999safest,feller1991introduction}.

\begin{definition}[Exclusive random variables and events]\label{def:excl-assets}
    Let $\Omega$ be a set. Two random variables $A,B : \Omega \to \mathbb R$  are said to be \emph{exclusive} if $A(\omega) B(\omega) = 0$ for all $\omega \in \Omega$. Two events $E, E' \subseteq \Omega$ are said to be exclusive if $E \cap E' = \emptyset$, which implies that their indicator functions are exclusive random variables, that is, $\mathds{1}_{E} (\omega) \mathds{1}_{E'} (\omega) = 0$ for all $\omega \in \Omega$.
\end{definition}

Note that the inverse, or pre-image, of a random variable returns a subset of $\Omega$~\cite{oksendal2003stochastic}, for example,
\begin{equation}
    \parens*{\mathds{1}_{E}}^{-1}(1) = E\subseteq \Omega.
\end{equation}
Two Kronecker delta random variables $\delta(\omega^1,\omega')$ and $\delta(\omega^2,\omega')$ are exclusive if and only if the corresponding events $\omega^1$ and $\omega^2$ are distinct. Hence, two random variables $A^\mathrm{cl}_{(a,b|x,y)}$ and $A^\mathrm{cl}_{(a',b'|x',y')}$ are exclusive if and only if their supports are disjoint, that is, there exists no deterministic classical strategy $\omega \in \Omega$ compatible with both input-output configurations. In the CHSH game, this occurs when $(x,y)$ and $(x',y')$ overlap in at least one argument, and for that argument, the corresponding outputs are different, that is, $x=x'$ and $a \neq a'$ or $y=y'$ and $b\neq b'$, or both. For example, $A^\mathrm{cl}_{(0,0 \mid 0,0)}$ and $A^\mathrm{cl}_{(1,1 \mid 0,0)}$ are winning random variables and satisfy the predicate $a \oplus b = x \wedge y$, but these random variables are exclusive. Mathematically, the reason is that
\begin{equation}
    \parens*{A^\mathrm{cl}_{(0,0 \mid 0,0)}}^{-1}(1) \cap \parens*{A^\mathrm{cl}_{(1,1 \mid 0,0)}}^{-1}(1) = \emptyset.
\end{equation}
Here, the events corresponding to these pre-images are
\begin{align}
    \parens*{A_{(0,0 \mid 0,0)}^\mathrm{cl}}^{-1}(1) &=: E_{00}\\
    &= \curly*{\omega \in \Omega : \omega_0^A = 0, \omega_0^B = 0}\\
    &= \curly*{(0,0,0,0), (0,0,0,1), (0,1,0,0), (0,1,0,1)}
\end{align}
and
\begin{align}
    \parens*{A_{(1,1 \mid 0,0)}^\mathrm{cl}}^{-1}(1) &=: E_{11}\\
    &= \curly*{\omega \in \Omega : \omega_0^A = 1, \omega_0^B = 1}\\
    &= \curly*{(1,0,1,0), (1,0,1,1), (1,1,1,0), (1,1,1,1)}.
\end{align}
Graph theory provides a tool to systematically account for exclusivity.

The exclusivity graph framework also provides a method for computing the upper bound on the maximal classical and quantum winning probabilities~\cite{cabello2010noncontextuality,cabello2013basic,cabello2014graph,bharti2022graph-theoretic} in the CHSH game and other nonlocal games. First, we recall the random variables defined in~\cref{eqn:exclusive-random-variable} and we think of $(a,b \mid x,y)$ as associated subsets of $\Omega$, which we now call \emph{conditional events} in the CHSH game. Each event occurs with probability $p(a,b \mid x,y)$; see~\cref{subsec:chsh-game} and the winning probability defined therein. Recall also that there are 16 such events corresponding to the 16 different eigenstates of the operators $A_x \otimes B_y$; again, see~\cref{subsec:chsh-game}. Among these events, 8 of them lead to winning the CHSH game, corresponding to the events that have eigenvalue $(-1)^{x \wedge y}$. The winning events and their corresponding eigenstates are listed in~\cref{fig:chsh-excl-graph}. 

\begin{figure}[tb]
    \centering
    \begin{tikzpicture}[
        scale=1.05,
        edge/.style={black, line width=0.75pt, line cap=round},
        vertex/.style={
            circle,
            draw=black,
            fill=themecolor,
            minimum size=5.2pt,
            inner sep=0pt,
            line width=0.7pt
        },
        lab/.style={
            inner sep=1pt,
            outer sep=0pt
        }
    ]
    \newcommand{\state}[1]{\textcolor{themecolor}{\scriptsize $#1$}}
    \newcommand{\event}[1]{{\large $#1$}}
    \newcommand{\StateOverEvent}[2]{\state{#1}\\[-1pt]\event{#2}}
    \newcommand{\EventOverState}[2]{\event{#2}\\[-1pt]\state{#1}}
    
    \def\R{2.15}
    \foreach \i/\ang in {
        0/90,
        1/45,
        2/0,
        3/-45,
        4/-90,
        5/-135,
        6/180,
        7/135
    }{\coordinate (v\i) at (\ang:\R);}
    
    \draw[edge] (v0)--(v1)--(v2)--(v3)--(v4)--(v5)--(v6)--(v7)--cycle;
    \draw[edge]
        (v0)--(v4)
        (v1)--(v5)
        (v2)--(v6)
        (v3)--(v7);
    
    \foreach \i in {0,...,7}{
        \node[vertex] at (v\i) {};
    }
    
    \node[lab, anchor=south, align=center] at ($(v0)+(0,0.22)$)
        {\StateOverEvent
            {\ket{0}\ket{b_{0,+}}}
            {(0,0 \mid 0,0)}};
    
    \node[lab, anchor=west, align=left] at ($(v1)+(0.34,0.12)$)
        {\StateOverEvent
            {\ket{1}\ket{b_{1,-}}}
            {(1,1 \mid 0,1)}};
    
    \node[lab, anchor=west, align=left] at ($(v2)+(0.36,0)$)
        {\StateOverEvent
            {\ket{-}\ket{b_{1,+}}}
            {(1,0 \mid 1,1)}};
    
    \node[lab, anchor=north west, align=left] at ($(v3)+(0.28,-0.08)$)
        {\EventOverState
            {\ket{+}\ket{b_{0,+}}}
            {(0,0 \mid 1,0)}};
    
    \node[lab, anchor=north, align=center] at ($(v4)+(0,-0.24)$)
        {\EventOverState
            {\ket{1}\ket{b_{0,-}}}
            {(1,1 \mid 0,0)}};
    
    \node[lab, anchor=north east, align=right] at ($(v5)+(-0.28,-0.08)$)
        {\EventOverState
            {\ket{0}\ket{b_{1,+}}}
            {(0,0 \mid 0,1)}};
    
    \node[lab, anchor=east, align=right] at ($(v6)+(-0.36,0)$)
        {\StateOverEvent
            {\ket{+}\ket{b_{1,-}}}
            {(0,1 \mid 1,1)}};
    
    \node[lab, anchor=east, align=right] at ($(v7)+(-0.34,0.12)$)
        {\StateOverEvent
            {\ket{-}\ket{b_{0,-}}}
            {(1,1 \mid 1,0)}};
    \end{tikzpicture}
    \caption{Exclusivity graph corresponding to the eight winning events in the CHSH scenario. This graph is a subgraph of the 16-vertex graph of all events. This graph is called the eight-vertex circulant graph $(1,4)$ and is denoted $\mathrm{Ci}_8(1,4)$. Each vertex is an event $(a,b \mid x,y)$, where $a,b \in \{0,1\}$ are the output bits given input bits $x,y \in \{0,1\}$, and an edge connects events assigning different outcomes to the same local measurement. Explicitly, the states are $\ket{b_0,+} = \cos{\theta}\ket{0} + \sin{\theta}\ket{1}$, $\ket{b_0,-} = -\sin{\theta}\ket{0} + \cos{\theta}\ket{1}$, $\ket{b_1,+} = \cos{\theta}\ket{0} - \sin{\theta}\ket{1}$, and $\ket{b_1,-} = \sin{\theta}\ket{0} + \cos{\theta}\ket{1}$.}
    \label{fig:chsh-excl-graph}
\end{figure}

From here, we can construct a graph, called an \emph{exclusivity graph}, to represent all of these events (the vertices) and their exclusivity relation (edges); see~\cref{fig:chsh-excl-graph}. This graph is an eight-vertex circulant graph $\mathrm{Ci}_8(1,4)$, which is a well-studied graph in graph theory. Let
\begin{equation}
    \Sigma \coloneqq \defsum{a \oplus b = x \wedge y}{}{p(a,b \mid x,y)},
\end{equation}
such that $p_{\mathrm{win}} = \Sigma/4$. Here, the input pair $(x,y)$ specifies a \emph{measurement context}. A \emph{non-contextual model} assigns outputs to all local measurements in advance, with Alice's assigned output for input $x$ being independent of $y$, and Bob's assigned output for input $y$ being independent of $x$. Probabilistic non-contextual models are mixtures of these assignments, as in the classical strategies that we saw above in~\cref{subsec:probability-theory-view}. Correlations that admit no such description are called \emph{contextual}. In each deterministic assignment, the winning events assigned value 1 form an independent set of the exclusivity graph. As such, both deterministic assignments and their probabilistic mixtures satisfy the upper bound on $\Sigma$ given by the \emph{independence number} of the exclusivity graph~\cite{cabello2014graph}, which in this case is $\alpha(\mathrm{Ci}_8(1,4)) = 3$, so we have
\begin{equation}
    \Sigma \leq \alpha(\mathrm{Ci}_8(1,4)) = 3.
\end{equation}
For the CHSH game, this bound is tight. Quantum correlations, on the other hand, need not admit such a non-contextual description. In a quantum theory, the upper bound is given by the \lovasz theta number~\cite{cabello2014graph}, which in this case is $\vartheta(\mathrm{Ci}_8(1,4)) = 2 + \sqrt{2} \approx 3.414$, so we have
\begin{equation}
    \Sigma \leq \vartheta(\mathrm{Ci}_8(1,4)) \approx 3.414.
\end{equation}
For the CHSH game, this is also tight. Therefore, for the CHSH game, the optimal winning probabilities are
\begin{equation}
    p_{\mathrm{win,classical}} = \frac{3}{4}\quad\text{and}\quad p_{\mathrm{win,quantum}} = \frac{2+\sqrt{2}}{4} \approx 0.854.
\end{equation}
This gives us the well-known quantum-classical separation in the CHSH game, but here derived from graph-theoretic quantities.

\section{Classical and quantum assets}\label{sec:assets}
We now develop the framework for a \emph{quantum market} that allows us to price certain classes of portfolios from first principles. We do so by first introducing classical assets and how to incorporate them into the CHSH game in~\cref{subsec:classical-assets,subsec:classical-assets-chsh}. Then, we introduce the analogous quantum assets and their role in the CHSH game in~\cref{subsec:quantum-assets,subsec:quantum-assets-chsh}. We then tie together the formalism of classical and quantum assets with the notion of exclusivity that we introduced in the previous section by introducing exclusive assets in~\cref{subsec:exclusive-assets} and exclusive asset portfolios in~\cref{subsec:exclusive-asset-portfolios}. We provide a summary of the contrasts between the classical and quantum models in~\cref{tab:models} for reference.

\begin{table}[tb]
    \centering
    \begin{tblr}{colspec={X[c,m]X[c,m]X[c,m]},hlines,vlines,row{even}={bg=themecolor!50},row{1}={font=\bfseries}}
        Classical model & Quantum model & Financial interpretation\\
        Sample space $\Omega$ & Hilbert space $\mathcal{H}$ & Possible underlying states\\
        Probability vector $\vb{p}$ & Density operator $\rho$ & Probabilistic market description\\
        Event indicator $\mathds{1}_E$ & Projector $P$ & Binary contingent claim\\
        $\defsum{\omega}{}{p(\omega)A(\omega)}$ & $\Tr[\rho A]$ & Expected payoff\\
        Disjoint supports & Orthogonal supports & Payoff exclusivity
    \end{tblr}
    \caption{Summary of different market models and their financial interpretation.}
    \label{tab:models}
\end{table}

\subsection{Classical assets}\label{subsec:classical-assets}
Starting from the classical perspective, we define an \emph{asset} as any object that has value at some future time point. Here, by "having value," we mean that the asset allows for the receipt of, or the exchange of, some numeraire currency such as the United States dollar (USD) with nonzero probability at some future point in time. Thus, in addition to well-known assets like houses, assets include contracts with the promise of future cashflows such as bonds, equities and their dividends, commodities futures, and bets or gambles. In particular, the mere \emph{possibility} for the owner of an asset to receive a future cashflow inherently gives value to the asset in the present moment. Often, there is a \emph{market mechanism} that determines the \emph{price} corresponding to the market's assessment of the fair value of the asset. For example, for highly liquid S\&P 500 constituent stocks, the prices are determined by the frequent buying and selling on public exchanges like the New York Stock Exchange and the NASDAQ. In more theoretical cases such as when a market model is given, the price itself or the price discovery mechanism can be described from a first-principles analysis. 

We consider a single-period model~\cite{follmer2016stochastic}, defined by a present time $t=0$ and a future time $t = t_\mathrm{final}$. At the present time, a certain number of financial assets are defined and buy and sell orders are executed. The outcomes at the future time are described probabilistically. The single-period setting is used for simplicity, but corresponds to real-world situations such as a single dividend, a lottery win, or the exercise of \emph{European-style} options, which are exercised (i.e., cashed out) only once at the maturity time of the option\footnote{This is in contrast to, for example, \emph{American-style} options, where the buyer can exercise the asset at any time between the purchase time $t=0$ and the maturity time $t=t_\mathrm{final}$.}. In these scenarios, an investor buying the asset at time $t=0$ will receive a cashflow at the maturity time $t=t_\mathrm{final}$, and the cashflow depends on probabilistic outcomes at $t=t_\mathrm{final}$.

First, we describe the ``classical'' probabilistic scenario~\cite{follmer2016stochastic}. Let the set of financial outcomes or events at time $t_\mathrm{final}$ be described by the sample space $\Omega$. For simplicity, we consider the case that $\Omega$ is finite and we denote $K = \abs{\Omega}$. The probability that each event $\omega_i \in \Omega$ occurs at the future time is described by $\vb{p} \in \Delta^K$, where $\Delta^K$ is the simplex consisting of non-negative, $\ell_1$-normalized $K$-dimensional vectors. More explicitly, we take the sample space to be $\Omega = \{\omega_1, \ldots, \omega_K\}$ and the probability distribution to be $\vb{p} = (p_1, \ldots, p_K)$, where each event $\omega_i$ has probability $p_i$ of occurring for $i \in [K]$ and where $\defsum{i=1}{K}{p_i} = 1$. Then, we define a classical asset as follows:

\begin{definition}[Classical asset]\label{def:classical-asset}
    A \emph{classical asset} $A: \Omega \to \mathcal{E}$ is a random variable assigning each event $\omega_i \in \Omega$ to a \emph{cashflow} or \emph{payout structure} $\mathcal{E}$ at time $t_\mathrm{final}$, where $\mathcal{E}$ could be binary values $\{0,1\}$, the real numbers $\RR$, or any other set.
\end{definition}

Consider a basic market where all assets have a real-valued outcome determined by the elements from $\Omega$. We express the assets using linear combinations of \emph{Arrow securities}~\cite{arrow1954existence}, or \emph{Arrow-Debreu securities}, which are theoretical securities that pay one unit of a numeraire (e.g., a dollar) when some event $\omega \in \Omega$ happens, and nothing otherwise. Formally, Arrow securities can be defined as follows:

\begin{definition}[Classical Arrow security]\label{def:classical-arrow-security}
    Let $\Omega$ be a sample space. A \emph{classical Arrow security} is a random variable $A^{(\omega)} : \Omega \to \{0,1\}$ whose payout structure is given by
    \begin{equation}
        A^{(\omega)}(\omega') =
        \begin{cases}
            1 & \text{if $\omega' = \omega$}\\
            0 & \text{otherwise,}
        \end{cases}
    \end{equation}
    where $\omega, \omega' \in \Omega$.
\end{definition}

In many financial models, it is assumed that there exists a classical asset that achieves a fixed rate of return. That is, the \emph{risk-free asset} is an investment with a rate of return $r > 0$ for all $\omega \in \Omega$. The risk-free rate is used to discount future cashflows to today's value. Intuitively, the risk-free rate is an alternative investment that a trader could make that is guaranteed to pay them $(1+r)$ times their initial investment. This, however, is at the cost of missing out on potentially higher returns from a riskier investment, namely the purchase of one of the other assets with nonzero risk. In real-world scenarios, the risk-free rate is often taken to be the rate of return of United States Treasury bills (T-bills).

\subsection{Event-based classical assets in the CHSH setting}\label{subsec:classical-assets-chsh}
Recall the sample space $\Omega$ in~\cref{eq:def_strategyspace} and the random variables in~\cref{eqn:exclusive-random-variable}. Let $\vb{p}$ be a probability measure on the subsets of $\Omega$. We can think of $A^\mathrm{cl}_{(a,b \vert x,y)}$ as assets under~\cref{def:classical-asset} if the following holds. First, the buyer of an asset typically is assumed to have no control over the probability measure $\vb{p}$, where $\vb{p}$ describes the state of the world at some future time. In the CHSH setting, Alice and Bob can decide beforehand on a strategy, that is, the probability measure, with the optimal classical strategy simply being $p(\omega) = \delta_{\omega,(0,0,0,0)}$. However, to align with the financial setting, we consider that the holder of the random variables $A^\mathrm{cl}_{(a,b|x,y)}$ does not precisely know or control the strategy embodied in $\vb{p}$. Second, we consider the outcome of the random variable to be associated with a cashflow. Specifically, we take $A^\mathrm{cl}_{(a,b|x,y)} = 1$ to mean a payoff of $\$1$. Note that, for example,
\begin{equation}
    A^\mathrm{cl}_{(0,0|0,0)}(\omega) = \delta_{\omega,(0,0,0,0)} +  \delta_{\omega,(0,0,0,1)} +\delta_{\omega,(0,1,0,0)} + \delta_{\omega,(0,1,0,1)},
\end{equation}
highlighting that the asset is a linear combination of four Arrow securities. The same holds true for the other $A^\mathrm{cl}_{(a,b|x,y)}$. Finally, there is some market mechanism that allows for the exchange of these assets. In~\cref{sec:chsh-market-game}, we provide a discussion of such a market. Under these conditions, then, we consider $A^\mathrm{cl}_{(a,b|x,y)}$ to be classical assets that can be used in a market setting instantiated via the playing of a CHSH game. We make this more explicit as we develop the quantum analogues in the rest of this section and in~\cref{sec:chsh-market-game}.

\subsection{Quantum assets}\label{subsec:quantum-assets}
We now expand the above to \emph{quantum markets} and \emph {quantum assets}. The works of Bao and Rebentrost~\cite{bao2023fundamental} and Hughston and S\'{a}nchez-Betancourt~\cite{hughston2024valuation} argue that there could arise scenarios in which the probabilistic description of the future time $t_\mathrm{final}$ involves a quantum state in contrast to the classical probability vector $\vb{p}$. For this scenario, we consider a Hilbert space $\mathcal{H}$ in contrast to the classical sample space $\Omega$, and a setting where the future state is described by a density operator $\rho \in \mathcal{D}(\mathcal{H})$.  We call this density operator the "market density operator" or simply the "market state." Formally, we have:

\begin{definition}[Single-period quantum market density operator]\label{def:market-density-operator}
    Let $\mathcal{H}$ be a Hilbert space and $\mathcal{D}(\mathcal{H})$ be the set of density operators defined on $\mathcal{H}$. A single-period quantum market is a market where the probabilistic nature of the market at some future time $t=t_\mathrm{final}$ is described by a \emph{market density operator} $\rho \in \mathcal{D}(\mathcal{H})$.
\end{definition}

For any $\rho \in \mathcal{D}(\mathcal{H})$, there exists a unitary operator such that $\rho = U \mathrm{diag}(\vb{p}) U^\dagger$ for some probability distribution $\vb{p}$. The diagonal density operator corresponding to the classical market is $\rho_\mathrm{cl} = {\rm diag}(\vb{p})$, where $\vb{p}$ is a probability distribution in the fixed reference basis corresponding to classical events.

In this way, we can see that quantum assets are a generalization of the classical assets that we defined in~\cref{subsec:classical-assets}. A quantum asset remains an asset that provides a cashflow to the owner at the future time, or, more generally, such a quantum asset has an influence on the amount of this cashflow. The works of Bao and Rebentrost~\cite{bao2023fundamental} and Hughston and S\'{a}nchez-Betancourt~\cite{hughston2024valuation} argue that such assets arise in scenarios described by a market density operator. In both~\cite{bao2023fundamental} and~\cite{hughston2024valuation}, a quantum asset or a quantum contract is defined as a Hermitian operator. One possible scenario is that an investor buying the quantum asset at time $t=0$ will receive a cashflow at the maturity time $t=t_\mathrm{final}$ depending on the measurement outcome of that asset, where the cashflow amount is encoded in the eigenvalues of the quantum asset. Similar to~\cite{bao2023fundamental} and~\cite{hughston2024valuation}, for our work, we define a quantum asset as follows:

\begin{definition}[Quantum asset]\label{def:quantum-asset}
    Let $\mathcal{H}$ be a $K$-dimensional Hilbert space. A \emph{quantum asset} is a Hermitian operator $A \in \CC^{K \times K}$. If we take the spectral decomposition of $A$ to be
    \begin{equation}
        A = \sum_{k=1}^K a_{k} P_{k},
    \end{equation}
    then holding one unit of $A$ gives a payoff of $a_{k}$ units of the numeraire when the projective measurement $\{P_{k}\}_{k \in [K]}$ at time $t=t_\mathrm{final}$ returns outcome $k$. For a market state $\rho \in \calD(\mathcal{H})$, the expected payoff at time $t = t_\mathrm{final}$ is
    \begin{equation}
        \Tr[\rho A] = \sum_{k=1}^K a_{k}\Tr[\rho P_{k}].
    \end{equation}
\end{definition}

We can also consider a natural quantum analog of the classical Arrow securities that we defined in~\cref{def:classical-arrow-security}, where a quantum Arrow security is a rank-1 projector~\cite{hughston2024valuation}. Formally, we have the following:

\begin{definition}[Projector quantum Arrow security]\label{def:quantum-arrow-security}
    Let $\mathcal{H}$ be a $K$-dimensional Hilbert space. For each event $k \in [K]$, the \emph{quantum Arrow security} associated with event $k$ is a rank-1 projector $P_k$, where $P_k P_{k'} = \delta_{kk'}P_k$ and $\sum_{k=1}^{K}{P_k} = I$, which pays one unit of a numeraire if the projective measurement $\curly*{P_k, I-P_k}$ at maturity returns the outcome associated with $P_k$ and zero otherwise.
\end{definition}

Often, classical assets are represented by linear combinations of classical Arrow securities. In analogy, one can attempt to express quantum assets in terms of quantum Arrow securities. For example, a Hermitian asset $A$ with eigenstates $\{\ket{k}\}$ can be expanded as
\begin{equation}
    A = \sum_{k=1}^K a_{k} \ketbra{k},
\end{equation}
where $\{a_{k}\}$ are the corresponding eigenvalues. Of course, several non-commuting assets cannot all be expressed as linear combinations of a single complete family of mutually orthogonal quantum Arrow securities, as this would imply that they are simultaneously diagonalizable. In~\cref{subsec:quantum-assets-chsh}, we will define appropriate quantum Arrow securities for the CHSH setting. Given multiple assets, we can construct a \emph{portfolio} of quantum assets:

\begin{definition}[Portfolio]\label{def:portfolio}
    Let $d \in \NN$ and let $\vb{A} = \lp A_1, A_2, \ldots, A_{d} \rp \in \CC^{d \times K \times K}$ be a \emph{basket} of quantum assets. A \emph{portfolio} $\vb{w} \in \ZZ_{\geq 0}^{d}$ is a vector encoding the amount of each asset an investor holds.
\end{definition}

In this work, we restrict shares to be non-negative integers, though future work could consider fractional shares or negative shares, which would allow for shorting. The \emph{expected value} of a portfolio $\vb{w}$ is then defined as follows:

\begin{definition}[Expected portfolio value]\label{def:expected-portfolio-value}
    In the setting of~\cref{def:market-density-operator,def:quantum-asset,def:portfolio}, the \emph{expected portfolio value} in $t = t_\mathrm{final}$ currency units is given by
    \begin{align}
        \EE_\rho[\vb{w}] \coloneqq \Tr[\rho\lp \vb{w} \cdot \vb{A} \rp] = \sum_{i=1}^d w_i \Tr[\rho A_i].
    \end{align}
\end{definition}

This expected portfolio value will allow us to determine fair bid and ask prices in the quantum market game that we develop in~\cref{sec:chsh-market-game}.

\subsection{Event-based quantum assets in the CHSH setting}\label{subsec:quantum-assets-chsh}
We now apply these definitions to the CHSH setting, where the projectors associated with input–output events provide concrete examples of quantum assets and quantum Arrow securities. In the quantum case, a CHSH strategy consists of a shared quantum state $\rho$ and local projective measurements $\{\Pi^A_{a|x}\}_{a=0,1}$ and $\{\Pi^B_{b|y}\}_{b=0,1}$ for each input $x,y \in \{0,1\}$. Since these measurements have two outcomes, their projectors satisfy
\begin{equation}
    \Pi^A_{1|x} = I_A-\Pi^A_{0|x},\qquad \Pi^B_{1|y} = I_B-\Pi^B_{0|y}.
\end{equation}
For each input–output configuration $(a,b \mid x,y)$, the corresponding event projector is
\begin{equation}
    \Pi_{(a,b|x,y)} = \Pi^A_{a|x} \otimes \Pi^B_{b|y},
\end{equation}
and its conditional probability is
\begin{equation}
    p(a,b \mid x,y) = \Tr[\rho\,\Pi_{(a,b|x,y)}].
\end{equation}
For winning configurations, this parallels the classical identity $\EE_p[A^\mathrm{cl}_{(a,b|x,y)}] = p(a,b \mid x,y)$ from~\cref{eq:cl_condp}: the classical indicator and the quantum projector represent the corresponding event in their respective models.

The projectors $\Pi_{a|x}^A \otimes \Pi_{b|y}^B$ can be considered quantum assets, according to~\cref{def:quantum-asset}, when the following criteria are met. First, we consider the state $\rho$ as the world/market state. The holder of the asset has no control over the preparation of this state, in contrast to the standard CHSH game where Alice and Bob prepare the state as part of their strategy. Second, the holder of the asset obtains \$1 if the local projective measurements of $\{\Pi^A_{a'|x}\}_{a'=0,1}$ and $\{\Pi^B_{b'|y}\}_{b'=0,1}$ on the market state return outcomes labeled $a$ and $b$, respectively, and receives zero otherwise. Finally, there exists a market that allows the trading of claims contingent on these measurement outcomes; we make all of this more concrete in~\cref{sec:chsh-market-game} when we develop a CHSH game-based quantum market. Since each event projector $\Pi^A_{a|x}\otimes\Pi^B_{b|y}$ is rank one, these assets are quantum Arrow securities under~\cref{def:quantum-arrow-security}. For each fixed input pair $(x,y)$, the four event projectors form a complete family of securities corresponding to the outcomes of a common projective measurement.

\subsection{Exclusive assets}\label{subsec:exclusive-assets}
We generalize our asset models, both classical and quantum, to what we call an \emph{exclusive asset model}. Such a model naturally lends itself to being described by tools from quantum contextuality; see~\cref{subsec:exclusivity-graph-chsh}. Nevertheless, such a model is also classically well-motivated, such as in the bond example we mentioned earlier, where an investor would not bet on the price of a bond going both up and down. Consider the formal definition of exclusivity for random variables and, by extension, for classical assets in~\cref{def:excl-assets}. We extend this notion to quantum assets to define \emph{exclusive quantum assets}:

\begin{definition}[Exclusive quantum assets]\label{def:excl-assets-quantum}
    Two quantum assets $A$ and $B$ are said to be \emph{exclusive} if $AB = 0$.
\end{definition}

That is, two quantum assets are exclusive when their nonzero payoff subspaces are orthogonal. We can succinctly describe the exclusive assets model via an \emph{asset exclusivity graph} $G(\vb{A}, E)$:

\begin{definition}[Asset exclusivity graph]\label{def:asset-excl-graph}
    Given a market populated with assets, denote a basket of assets by $\vb{A} = \parens*{A_1, A_2, \ldots, A_{d}}$, where $d \in \NN$. Then, an \emph{asset exclusivity graph} $G(\vb{A}, E)$ is a graph with vertex set $\vb{A}$ and edge set $E$, where two assets $A_i, A_j \in \vb{A}$ share an edge in $E$ if and only if $A_i$ and $A_j$ are exclusive assets.
\end{definition}

By considering the exclusivity graph, we can use some tools and results from \emph{quantum contextuality}~\cite{pavicic2023quantum,budroni2022kochen-specker}. For example, Cabello \etal~\cite{cabello2013basic} showed that quantum violations of non-contextual inequalities require exclusivity graphs that contain induced odd cycles with 5 or more vertices, and/or their complements. This identifies the graph structures that can support a quantum-classical separation, which we quantify in~\cref{subsec:exclusive-asset-portfolios} in the setting of portfolios using the independence number and the \lovasz theta number. The two theorems that we present in the next section are generalizations of Ref.~\cite[Result 1]{cabello2014graph}.

\subsection{Exclusive asset portfolios}\label{subsec:exclusive-asset-portfolios}
We now combine the definitions and basic concepts that we introduced in~\cref{subsec:classical-assets} with the notion of asset exclusivity from~\cref{subsec:exclusive-assets} to determine the maximum value of portfolios comprised of classical and quantum assets. The notion of exclusivity is well-motivated not just in classical finance as mentioned previously, but also in quantum portfolio optimization, where there has been recent work~\cite{yalovetzky2026quantum-informed} using quantum algorithms to solve the portfolio selection problem in graph-theoretic language.

We start with our first result connecting the maximum value of a portfolio comprised of bounded random variables to the weighted independence number of the asset exclusivity graph $G(\vb{A}, E)$. Recall that an independent set of the asset exclusivity graph represents a set of nonexclusive assets. The result is that the independence number is the upper bound for the expected portfolio value when maximizing over probability distributions, and is in fact tight given specific sets of assets.

\begin{thm}[Maximum classical portfolio value]\label{thm:max-classical-portfolio-value}
    Let $\vb{A} = (A_1, A_2, \ldots, A_d)$ be a collection of random variables describing financial assets. Each random variable is bounded as {$A_i \leq 1$}. Let $G(\vb{A}, E)$ be the  exclusivity graph of these assets and let $\vb{w} \in \RR_+^d$ be a portfolio. Then, the expected portfolio value maximized over probability distributions is upper bounded by the weighted independence number $\alpha(G,\vb{w})$. For a given exclusivity graph, there exist assets $A_1, A_2, \ldots, A_d$, each satisfying $A_i \leq 1$, such that the bound is saturated. 
\end{thm}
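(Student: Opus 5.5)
The plan is a pointwise argument followed by averaging over the probability distribution. The key observation is that exclusivity forces, at each outcome, the set of assets with nonzero payoff to be an independent set of $G(\vb{A},E)$. Saturation then follows from a construction in which the sample space is the family of independent sets of the given graph.

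\emph{Upper bound.} Fix an outcome $\omega\in\Omega$ and let $S(\omega)\coloneqq\{i : A_i(\omega)\neq 0\}$.
\begin{itemize}
\item If $i,j\in S(\omega)$ are distinct, then $A_i(\omega)A_j(\omega)\neq 0$. So $A_i$ and $A_j$ are not exclusive in the sense of \cref{def:excl-assets}, and by \cref{def:asset-excl-graph} there is no edge between them.
\item Hence $S(\omega)$ is an independent set.
\item Since $w_i\ge 0$ and $A_i\le 1$, the pointwise bound is
\begin{equation}
\sum_{i=1}^d w_i A_i(\omega)=\sum_{i\in S(\omega)} w_i A_i(\omega)\le \sum_{i\in S(\omega)} w_i\le \alpha(G,\vb{w}).
\end{equation}
\item Taking the expectation under any distribution $\vb{p}$ on $\Omega$ gives $\EE_{\vb{p}}\big[\sum_i w_iA_i\big]=\sum_\omega p(\omega)\sum_i w_iA_i(\omega)\le\alpha(G,\vb{w})$.
\end{itemize}
The bound therefore holds uniformly over distributions, and so also for the maximum. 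Negative payoffs cause no trouble: they only lower the sum, and the weights are nonnegative.

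\emph{Saturation.} Given a graph $G=(V,E)$ with $V=[d]$, take $\Omega\coloneqq\calS$, the set of all independent sets of $G$. Define the indicators $A_i(S)\coloneqq\mathds{1}[i\in S]$, which satisfy $A_i\le 1$.
\begin{itemize}
\item We must check that the asset exclusivity graph of these $A_i$ is exactly $G$.
\item If $(i,j)\in E$, no independent set contains both $i$ and $j$, so $A_iA_j\equiv 0$.
\item If $(i,j)\notin E$ with $i\neq j$, then $\{i,j\}\in\calS$ and $A_i(\{i,j\})A_j(\{i,j\})=1$, so the two assets are not exclusive.
\item Let $S^\ast$ attain the maximum in \cref{def:weighted-independence-number}, and take the point mass $p=\delta_{S^\ast}$. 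The expected portfolio value is then $\sum_{i\in S^\ast}w_i=\alpha(G,\vb{w})$.
\end{itemize}

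The only delicate step is getting the direction of the correspondence right: edges encode exclusivity, and non-edges must be genuinely realized. This is why the construction uses all independent sets, in particular the pairs $\{i,j\}$, rather than only a maximum one. Otherwise the constructed assets could have extra exclusivities and their graph would not be the prescribed one.
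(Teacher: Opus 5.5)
Your proposal is correct and follows essentially the same route as the paper: a pointwise bound using that the support set $S(\omega)$ is independent, then averaging over $\vb{p}$, with saturation from the sample space of all independent sets and a point mass on a maximum-weight one. The only cosmetic difference is that the paper first replaces each $A_i$ by its binarized support indicator $\tilde{A}_i \geq A_i$ and then reads off the independent set, whereas you bound $w_i A_i(\omega) \leq w_i$ directly on $S(\omega)$; this amounts to the same argument.
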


\begin{proof}
    We begin by proving the upper bound, then we prove saturation. Since
        \begin{align}
        \mathbb{E}_p[\mathbf{w} \cdot \mathbf{A}] = \sum_{\omega \in \Omega}p(\omega)\sum_{i=1}^d w_iA_i(\omega) ,
    \end{align}
    it suffices to prove that
    \begin{equation}
        \defsum{i}{}{w_iA_i(\omega)\leq \alpha(G, \vb{w})}
    \end{equation}
    for every $\omega \in \Omega$. Define auxiliary assets $\curly*{\Tilde{A}_i}$ as
    \begin{align}
        \Tilde{A}_i(\omega) =
        \begin{cases}
            0 & \text{if } A_i(\omega) = 0,\\
            1 & \text{otherwise.}
        \end{cases}
    \end{align}
    Then, the exclusivity graph of $\curly*{\Tilde{A}_i}$ is the same as that of $\curly*{A_i}$. Moreover, since $A_i \leq \Tilde{A}_i$, we have that
    \begin{equation}
        \defsum{i}{}{w_iA_i(\omega)} \leq \defsum{i}{}{w_i\Tilde{A}_i(\omega)}.
    \end{equation}
    For binary-valued assets, it holds that
    \begin{equation}
        \defsum{i}{}{w_i\Tilde{A}_i(\omega)} = \defsum{i : \Tilde{A}_i(\omega) = 1}{}{w_i}.
    \end{equation}
    Let $S(\omega)\coloneqq \curly*{i : \Tilde{A}_i(\omega)= 1}$. For $i,j \in S(\omega)$, $\Tilde{A}_i(\omega)\Tilde{A}_j(\omega) = 1 \neq 0$ so $(i,j) \notin E$. Hence, $S(\omega)$ is an independent set. By the definition of the weighted independence number,
    \begin{equation}
        \defsum{i\in S(\omega)}{}{w_i\leq \alpha(G, \vb{w})},
    \end{equation}
    which concludes the upper bound.

    To prove saturation, let $G = (V,E)$ be a given graph and choose the sample space $\Omega_G$ to be the set of all independent sets of $G$. For each vertex $i \in V$, we define the binary asset
    \begin{equation}
        A_i(S) \coloneqq
        \begin{cases}
            1, & i \in S,\\
            0, & i \notin S,
        \end{cases}
        \qquad S \in \Omega_G.
    \end{equation}
    These assets have exclusivity graph exactly $G$. If $(i,j) \in E$, no independent set contains both vertices, so $A_i(S)A_j(S) = 0$ for every $S \in \Omega_G$. On the other hand, if distinct vertices $i,j$ are nonadjacent, then $\{i,j\} \in \Omega_G$, and both assets equal $1$ on this outcome. Now, we choose an independent set $S^\star$ achieving the weighted independence number:
    \begin{equation}
        \defsum{i \in S^\star}{}{w_i} = \alpha(G, \vb{w})
    \end{equation}
    We denote by $p^\star$ the probability distribution concentrated on $S^\star$. Then,
    \begin{equation}
        \EE_{p^\star}[\vb{w} \cdot \vb{A}] = \defsum{i=1}{d}{w_i A_i(S^\star)} = \defsum{i \in S^\star}{}{w_i} = \alpha(G, \vb{w}),
    \end{equation}
    which proves saturation and thus completes the proof.
\end{proof}

One may associate the probability distribution of the previous theorem with a diagonal density matrix. We can  denote this density matrix by $\rho_\mathrm{cl}$ and call it a \emph{classical market state}; see also~\cref{def:market-density-operator}. The next theorem presents the result for quantum portfolios, that is, portfolios defined by quantum assets/operators, and maximizing over general density operators; see~\cref{def:market-density-operator}. Compared to earlier work~\cite{cabello2014graph}, the following result is extended beyond rank-1 projectors. 

\begin{thm}[Maximum quantum portfolio value]\label{thm:max-quantum-portfolio-value}
    Given $d \in \NN$ quantum {assets} $\vb{A} = \lp A_1, A_2, \ldots, A_{d} \rp$ bounded between $0$ and $I$, asset exclusivity graph $G(\vb{A}, E)$, and portfolio $\vb{w} \in \RR_+^d$, the expected portfolio value maximized over all mixed quantum states is upper bounded by the weighted \lovasz theta number $\vartheta(G, \vb{w})$. For a given exclusivity graph, there exist assets $A_1, A_2, \ldots, A_d$, each satisfying $0 \preceq A_i \preceq I$, such that the bound is saturated.
\end{thm}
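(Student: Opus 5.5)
The plan is to generalize the argument of Ref.~\cite[Result 1]{cabello2014graph}. From any state $\rho$ and assets $0 \preceq A_i \preceq I$, I will build a handle and an orthonormal representation of $\overline{G}$ whose value in the definition of $\vartheta(G,\vb{w})$ dominates $\sum_i w_i \Tr[\rho A_i]$.

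\textbf{Reduction to pure states.} Purify $\rho$ to a unit vector $\ket{\psi} \in \mathcal{H}\otimes\mathcal{H}'$ and replace $A_i$ by $A_i \otimes I$. This preserves expectations, the bounds $0 \preceq A_i \otimes I \preceq I$, and exclusivity. So it suffices to treat pure $\ket{\psi}$.

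\textbf{Construction of the representation.} Define
\[
\ket{v_i} = \frac{\sqrt{A_i}\ket{\psi}}{\norm{\sqrt{A_i}\ket{\psi}}}
\]
whenever the norm is nonzero. If it is zero, take $\ket{v_i}$ to be a fresh unit vector in an added orthogonal direction, one per such $i$. Such $i$ contribute $0$ to the portfolio value, so nothing is lost.

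\textbf{Orthogonality for edges of $G$.} If $(i,j) \in E$, then $A_iA_j = 0$ with both positive semidefinite. This forces $\operatorname{ran}A_i \perp \operatorname{ran}A_j$, and since $\operatorname{ran}\sqrt{A_i} = \operatorname{ran}A_i$, we get $\sqrt{A_i}\sqrt{A_j} = 0$. Hence $\braket{v_i}{v_j} = 0$. Edges of $G$ are exactly the non-adjacent pairs of $\overline{G}$, so $\{\ket{v_i}\}$ is an orthonormal representation of $\overline{G}$ in the sense of \cref{def:orthonormal-representation}.

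\textbf{The key inequality.} Take the handle $\ket{\phi} = \ket{\psi}$. Because $0 \preceq A_i \preceq I$, we have $\sqrt{A_i} \succeq A_i$, so $\mel{\psi}{\sqrt{A_i}}{\psi} \ge \mel{\psi}{A_i}{\psi}$. Therefore
\[
\abs{\braket{\psi}{v_i}}^2 = \frac{\mel{\psi}{\sqrt{A_i}}{\psi}^2}{\mel{\psi}{A_i}{\psi}} \ge \mel{\psi}{A_i}{\psi} = \Tr[\rho A_i].
\]
Multiplying by $w_i \ge 0$ and summing gives $\EE_\rho[\vb{w}] \le \sum_i w_i \abs{\braket{\psi}{v_i}}^2 \le \vartheta(G,\vb{w})$. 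This gives the upper bound.

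\textbf{Saturation and the main obstacle.} Take an optimal handle $\ket{\phi}$ and representation $\{\ket{v_i}\}$ attaining $\vartheta(G,\vb{w})$, say from \cref{fact:finding-lovasz-theta-number}. The naive choice $A_i = \ketbra{v_i}$, $\rho = \ketbra{\phi}$ attains the value. However, its exclusivity graph may contain extra edges, since nonadjacent vertices can receive accidentally orthogonal vectors. I expect this to be the main obstacle.

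To fix it, I will use the freedom to go beyond rank-1 and work in a direct sum $\mathcal{H}\oplus\mathbb{C}^{\Omega_G}$, where $\Omega_G$ is the set of independent sets of $G$ as in the proof of \cref{thm:max-classical-portfolio-value}. Set $A_i = \ketbra{v_i}\oplus Q_i$, where $Q_i$ is the diagonal projector onto those $S \in \Omega_G$ with $i \in S$. Then $0 \preceq A_i \preceq I$ and $A_iA_j = \ketbra{v_i}\ketbra{v_j}\oplus Q_iQ_j$.
\begin{itemize}
    \item If $(i,j) \in E$, both blocks vanish: $\braket{v_i}{v_j}=0$, and no independent set contains both $i$ and $j$, so $Q_iQ_j=0$.
    \item If $i,j$ are distinct and nonadjacent, then $\{i,j\} \in \Omega_G$, so $Q_iQ_j \neq 0$ and hence $A_iA_j \neq 0$.
\end{itemize}
So the exclusivity graph is exactly $G$. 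Finally, choose the state $\ketbra{\phi}\oplus 0$, supported on the first block. It gives $\sum_i w_i\abs{\braket{\phi}{v_i}}^2 = \vartheta(G,\vb{w})$, which saturates the bound.
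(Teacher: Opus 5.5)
Your proof is correct, and it takes a different route from the paper in both halves.

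For the upper bound, the paper passes to $\lambda_{\max}(\vb{w}\cdot\vb{A})$ and a top eigenvector $\ket{\psi}$. It forms the Gram matrix of the vectors $\sqrt{w_i}A_i\ket{\psi}$ and normalizes it by its trace, which it bounds using $A_i^2\preceq A_i$. It then plugs this matrix into the Galli--Letchford formulation $\vartheta(G,\vb{w})=\max_B\sum_{i,j}\sqrt{w_iw_j}B_{ij}$, where the eigen-equation makes the objective come out to exactly $\lambda_{\max}^2$. You instead work directly with the handle/orthonormal-representation definition the paper states. You use $\sqrt{A_i}\ket{\psi}$ as the representation, $\ket{\psi}$ itself as the handle, and $\sqrt{A_i}\succeq A_i$ in place of $A_i^2\preceq A_i$. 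This is the literal generalization of the Cabello--Severini--Winter argument. It needs no eigenvector and no external reformulation of $\vartheta$; the price is a purification, square roots, and fresh directions for vanishing vectors. The paper's route avoids square roots but relies on an equivalence of formulations it does not prove.

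For saturation, the paper Gram-decomposes an optimal $B_*$ into rank-one projectors and recovers the value by Cauchy--Schwarz. However, it only checks $A_iA_j=0$ on edges of $G$. Accidental orthogonality between non-adjacent vertices, and especially its choice $A_i=0$ whenever $\ket{x_i}=0$, can create extra exclusivities. For example, on the path $1-2-3$ with $w_2>w_1+w_3$, the optimum sets $A_1=A_3=0$, which are exclusive although vertices $1$ and $3$ are non-adjacent. So the paper's assets may realize a proper supergraph of $G$. Your direct sum with the classical block $Q_i$ from the proof of \cref{thm:max-classical-portfolio-value}, with the state supported on the quantum block, repairs exactly this. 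It realizes the graph $G$ itself while attaining $\vartheta(G,\vb{w})$, so on this point your argument is more complete than the paper's.
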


\begin{proof}
    We notice that, for a given set of quantum assets, the expected portfolio value is maximized when the quantum state is a projector into the subspace corresponding to the largest eigenvalue of the weighted sum of the assets:
    \begin{align}
        \EE_\rho[\vb{w}\cdot \vb{A}] = \Tr[\rho\lp \vb{w} \cdot \vb{A} \rp] \leq \lambda_{\max}\lp\vb{w} \cdot \vb{A}\rp.
    \end{align}
    Therefore, we need to show that this eigenvalue is upper bounded by the weighted \lovasz theta number. To this end, we use the following formulation of the weighted \lovasz theta number~\cite{galli2017lovasz}. Let $B$ range over all $d\times d$ Hermitian positive semidefinite matrices such that $B_{ij}=0$ for $(i,j)\in E$ and such that $\Tr[B]=1$. Then,
    \begin{align}
        \vartheta(G, \vb{w})= \max_B \defsum{i,j=1}{d}{\sqrt{w_i w_j} B_{ij}}.
    \end{align}
    We will construct a matrix $B$ that satisfies the above conditions and for which the right-hand side is lower bounded by $\lambda_{\max}\parens*{\vb{w} \cdot \vb{A}}$. Let $\ket{\psi} \equiv \ket{\psi(\lambda_{\max})}$ be the eigenvector of $\vb{w} \cdot \vb{A}$ with this eigenvalue and denote $\ket{\psi_i} = \sqrt{w_i} A_i\ket{\psi}$. Define a $d \times d$ matrix $X$ with elements $X_{ij} = \braket{\psi_i}{\psi_j}$. By construction, $X$ is a Hermitian, positive semidefinite matrix and $X_{ij} = 0$ for $(i,j) \in E$. Using the fact that each operator $A_i$ is bounded between $0$ and $I$, we have
    \begin{align}
        \Tr[X] = \defsum{i=1}{d}{\braket{\psi_i}} = \mel{\psi}{\defsum{i=1}{d}{w_iA_i^2}}{\psi} \leq \mel{\psi}{\vb{w} \cdot \vb{A}}{\psi} = \lambda_{\max}(\vb{w} \cdot \vb{A}).
    \end{align}
    Moreover,
    \begin{align}
        \defsum{i,j=1}{d}{\sqrt{w_iw_j}X_{ij}} = \defsum{i,j=1}{d}{\sqrt{w_iw_j}\braket{\psi_i}{\psi_j}} = \norm{\defsum{i=1}{d}{\sqrt{w_i}\ket{\psi_i}}}_2^2 = \lambda_{\max}^2(\vb{w} \cdot \vb{A}).
    \end{align}
    If $\lambda_{\max}(\vb{w} \cdot \vb{A}) = 0$, the upper bound is immediate; otherwise, $\Tr[X] > 0$. To obtain the matrix $B$, we normalize $X$ by its trace: $B = X/\Tr[X]$. Then,
    \begin{equation}
        \defsum{i,j=1}{d}{\sqrt{w_iw_j}B_{ij}} = \frac{\lambda_{\max}^2(\vb{w} \cdot \vb{A})}{\Tr[X]} \geq \lambda_{\max}(\vb{w} \cdot \vb{A}) \implies \lambda_{\max}(\vb{w} \cdot \vb{A}) \leq \vartheta(G,\vb{w}), 
    \end{equation}
    which proves the upper bound.

    To prove saturation, choose an optimal matrix $B_*$ in the above maximization, which exists because the feasible set is nonempty and compact. Take a Gram decomposition $(B_*)_{ij} = \braket{x_i}{x_j}$. Then, $\Tr[B_*] = \defsum{i}{}{\braket{x_i}} = 1$ and $\ket{\phi} = \parens*{\defsum{i}{}{\sqrt{w_i}\ket{x_i}}} / \sqrt{\vartheta(G,\vb{w})}$ is a unit vector. Let $A_i = \ketbra{x_i}/\braket{x_i}$ when $\ket{x_i} \neq 0$, and $A_i = 0$ otherwise. These projectors satisfy $A_i A_j = 0$ for $(i,j) \in E$.

    Set $r_i = \sqrt{\braket{x_i}}$ and $c_i = \sqrt{w_i}\braket{\phi}{x_i}/r_i$ for $r_i > 0$, with $c_i = 0$ when $r_i = 0$. Then $\defsum{i}{}{r_i^2} = 1$ and $\abs{c_i}^2 = w_i\mel{\phi}{A_i}{\phi}$. By the definition of $\ket{\phi}$ and Cauchy--Schwarz applied to $(r_i)_i$ and $(c_i)_i$,
    \begin{align}
        \vartheta(G,\vb{w})
        &= \abs{\defsum{i}{}{\sqrt{w_i}\braket{\phi}{x_i}}}^2
         = \abs{\defsum{i}{}{r_i c_i}}^2\\
        &\leq \parens*{\sum_i r_i^2}\parens*{\sum_i\abs{c_i}^2}
         = \defsum{i}{}{w_i\mel{\phi}{A_i}{\phi}}.
    \end{align}
    Taking $\rho = \ketbra{\phi}$, the upper bound gives $\Tr[\rho(\vb{w}\cdot\vb{A})]=\vartheta(G,\vb{w})$.
\end{proof}

We note that the pure market state $\rho_{\lambda_{\max}} \coloneqq \ketbra{\psi(\lambda_{\max})}{\psi(\lambda_{\max})}$ is constructed from the eigenvector of the portfolio $\vb{w} \cdot \vb{A}$ and achieves the largest expected payoff for the portfolio from the variational principle. Therefore, we call $\rho_{\lambda_{\max}}$ the \emph{maximum eigenmarket} of the portfolio. To relate~\cref{thm:max-classical-portfolio-value,thm:max-quantum-portfolio-value}, we highlight a well-known result from \lovasz~\cite{lovasz1979shannon}:

\begin{fact}\label{fact:ind-lovasz-numbers}
    Let $G$ be a graph. Then, $\alpha(G) \leq \vartheta(G)$.
\end{fact}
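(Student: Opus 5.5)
The most direct route combines the two theorems just proved; I will also keep a standalone graph-theoretic argument in reserve in case the paper wants this fact independent of the portfolio machinery. The first route runs as follows. By the saturation part of \cref{thm:max-classical-portfolio-value} with $\vb{w} = \vb{1}$, there are binary random variables $A_i = \mathds{1}_{\{S : i \in S\}}$ on the sample space $\Omega_G$ of independent sets of $G$. Their exclusivity graph is exactly $G$, and the point mass on a maximum independent set $S^\star$ attains $\alpha(G)$. The plan is to embed this classical model as a quantum one. Take $\mathcal{H} = \CC^{\abs{\Omega_G}}$ with orthonormal basis $\{\ket{S}\}_{S \in \Omega_G}$, and define the diagonal projectors $P_i = \sum_{S \ni i} \ketbra{S}$. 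These satisfy $0 \preceq P_i \preceq I$. Moreover, $P_iP_j = \sum_{S \ni i,j}\ketbra{S}$, which vanishes exactly when no independent set contains both $i$ and $j$, i.e.\ exactly when $(i,j) \in E$. So the quantum asset exclusivity graph of $(P_1,\dots,P_d)$ is again $G$.

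Next I apply the upper bound in \cref{thm:max-quantum-portfolio-value} to these $P_i$ with $\vb{w} = \vb{1}$. Choosing the classical market state $\rho = \ketbra{S^\star}$ gives
\[
\Tr\Big[\rho \sum_i P_i\Big] = \abs{S^\star} = \alpha(G),
\]
and the theorem bounds this by $\vartheta(G,\vb{1}) = \vartheta(G)$. The same argument with general weights yields $\alpha(G,\vb{w}) \le \vartheta(G,\vb{w})$. The only point that needs care is checking that exclusivity in the quantum sense ($AB=0$) coincides with the classical notion for these commuting diagonal operators; this reduces to the identity for $P_iP_j$ above.

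As a self-contained alternative, I would use \cref{lovasz-theta-number} directly. Given a maximum independent set $S^\star$, I need an orthonormal representation of $\overline{G}$. In $\overline{G}$, two vertices are nonadjacent exactly when they are adjacent in $G$, so the representation must make vectors orthogonal whenever $(i,j)\in E$. Assign $\ket{v_i} = \ket{e_0}$ to every $i \in S^\star$, and to each remaining vertex assign its own fresh basis vector orthogonal to $\ket{e_0}$ and to all the others. This is valid: two vertices of $S^\star$ are never adjacent in $G$, so they are allowed to share the vector $\ket{e_0}$, while every other pair is orthogonal. With handle $\ket{\phi} = \ket{e_0}$, the objective equals $\abs{S^\star} = \alpha(G)$, and maximizing over representations gives $\vartheta(G) \ge \alpha(G)$.

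The main obstacle is only bookkeeping about the complement convention in \cref{lovasz-theta-number}, namely which vertex pairs are forced to be orthogonal. I would double-check it against the SDP of \cref{fact:finding-lovasz-theta-number}: there, the matrix $X = \ketbra{u}$ with $u = (1, \mathds{1}_{S^\star})$ is feasible, since $X_{ij} = 0$ for $(i,j) \in E$ exactly because $S^\star$ is independent, and it attains $\sum_i X_{ii} = \alpha(G)$.
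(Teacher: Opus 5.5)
The paper gives no proof of this fact. It quotes it from Lov\'asz~\cite{lovasz1979shannon} as one half of the sandwich theorem, so your proposal supplies an argument the paper omits, and the argument is correct.

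Your second, self-contained argument is the natural one given that the paper defines $\vartheta(G)$ as a maximum. Sending all of a maximum independent set $S^\star$ to one vector $\ket{e_0}$, and every other vertex to a fresh orthogonal vector, is a valid orthonormal representation of $\overline{G}$. The handle $\ket{e_0}$ then gives value $\abs{S^\star}$, and your check against the SDP of \cref{fact:finding-lovasz-theta-number} confirms you have the complement convention right.

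Lov\'asz's own proof works on the dual, minimization side. For an orthonormal representation $\{\ket{u_i}\}$ of $G$ and a unit vector $\ket{c}$, the vectors indexed by an independent set $S$ are pairwise orthogonal, so $\abs{S}\min_i\abs{\braket{c}{u_i}}^2\le 1$. One then uses that $\vartheta(G)$ equals the minimum of $\max_i\abs{\braket{c}{u_i}}^{-2}$ over such data. That route needs the min--max equivalence, while yours needs nothing beyond the paper's definition.

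Your first route embeds the classical indicator assets of \cref{thm:max-classical-portfolio-value} as commuting diagonal projectors and applies the upper bound of \cref{thm:max-quantum-portfolio-value}. It is valid and not circular, since the proof of \cref{thm:max-quantum-portfolio-value} never invokes $\alpha\le\vartheta$. Its merit is conceptual: it shows a classical market is the commuting special case of a quantum one, so the comparison the paper draws right after the fact already follows from its two theorems.

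It is, however, heavier than necessary. It rests on the Galli--Letchford characterization $\vartheta(G,\vb{w})=\max_B\sum_{i,j}\sqrt{w_iw_j}B_{ij}$ quoted from~\cite{galli2017lovasz}, whose equivalence with the paper's definition is itself a cited nontrivial theorem. Within that characterization, the inequality is immediate from the feasible choice $B=\mathbf{1}_{S^\star}\mathbf{1}_{S^\star}^{T}/\abs{S^\star}$, with $\mathbf{1}_{S^\star}$ the indicator vector of $S^\star$. So lead with the direct argument and keep the embedding as a remark.
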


An immediate consequence of~\cref{thm:max-classical-portfolio-value,thm:max-quantum-portfolio-value,fact:ind-lovasz-numbers} is a simple relationship between the maximum portfolio value in a classical market and a quantum market. In particular, given an asset exclusivity graph $G$, we have $\alpha(G) \leq \vartheta(G)$. In many scenarios, this is a strict inequality: $\alpha(G) < \vartheta(G)$. Thus, the expected value of a portfolio comprised of quantum assets can attain an expected value greater than that of a portfolio of classical assets. In~\cref{sec:chsh-market-game}, we will see that if traders in a quantum market have a high belief in the ability of an exchange to produce entangled quantum states, then they will estimate a higher expected return and will thus pay more to participate in the market. Before developing the full market, we illustrate these notions with a simple example:

\begin{example}[Simplest asset exclusivity graph with quantum advantage]\label{ex:simplest-asset-excl-advantage}
    Consider a scenario with five assets, $\vb{A} = \{A_1, A_2, A_3, A_4, A_5\}$. If the asset exclusivity graph for these assets is the cycle graph $C_5$ as in~\cref{fig:five-asset-excl-graph}, then the maximum value of a portfolio of classical assets is $\alpha(C_5) = 2$ while the maximum value of a portfolio of quantum assets is $\vartheta(C_5) = \sqrt{5} \approx 2.24$ (see~\cref{subsec:graph-theory} for analytic expressions for the independence and \lovasz theta numbers of cycle graphs). Thus, we see that $\alpha(C_5) < \vartheta(C_5)$. This is the simplest asset exclusivity graph that yields a quantum-classical separation.
\end{example}

\begin{figure}[tb]
    \centering
    \begin{tikzpicture}[every node/.style={inner sep=1pt}]

    \coordinate (A1) at (0,2.2);
    \coordinate (A2) at (1.9,0.8);
    \coordinate (A3) at (1.2,-1.2);
    \coordinate (A4) at (-1.2,-1.2);
    \coordinate (A5) at (-1.9,0.8);

    \draw[line width=1.1pt] (A1) -- (A2) -- (A3) -- (A4) -- (A5) -- cycle;

    \node[above]      at (A1) {$A_1$};
    \node[right]      at (A2) {$A_2$};
    \node[below right] at (A3) {$A_3$};
    \node[below left]  at (A4) {$A_4$};
    \node[left]       at (A5) {$A_5$};

    \node at (0,-2.2) {\Large $\mathbf{C_5}$};

    \node[anchor=west] at (4.0,0.6) {$\alpha\left(C_5\right)=2$};
    \node[anchor=west] at (4.0,-0.4) {$\vartheta\left(C_5\right)=\sqrt{5}$};

    \end{tikzpicture}
    \caption{The simplest asset exclusivity graph with quantum advantage, a cycle graph, $C_5$ with independence number $\alpha(C_5) = 2$ and \lovasz theta number $\vartheta(C_5) = \sqrt{5} \approx 2.24$.}
    \label{fig:five-asset-excl-graph}
\end{figure}

We conclude this section with a few remarks. First, we emphasize that only those markets populated by a set of quantum assets whose exclusivity graph contains an induced odd cycle of length at least five, or its complement, as an induced subgraph will exhibit a quantum-classical separation; \cref{ex:simplest-asset-excl-advantage} is thus a minimal example for a separation. Second, we highlight a result about random graphs, namely that the ratio of the \lovasz theta number to the independence number, $\vartheta(G)/\alpha(G)$, is an unbounded function of the number of vertices in the graph $G$. In particular, for Erd\H{o}s-R\'{e}nyi random graphs\footnote{An Erd\H{o}s-R\'{e}nyi graph $G(n,p)$ is a random graph with $n$ vertices where each of the $\frac{n(n-1)}{2}$ possible edges in the graph exists with probability $p$.}, several early works~\cite{matula1970complete,grimmett1975colouring,bollobas1976cliques,frieze1990independence} showed that the independence number scales as $\bigTheta{\log{n}}$ and Juh\'{a}sz~\cite{juhasz1982asymptotic} showed that the \lovasz theta number scales as $\bigTheta{\sqrt{n}}$. Thus, the ratio of the \lovasz theta number to the independence number, $\vartheta(G)/\alpha(G)$, of an Erd\H{o}s-R\'{e}nyi random graph $G$ drawn from $G(n,p)$ with fixed $p \in (0,1)$ scales as $\bigTheta{\sqrt{n}/\log(n)}$ with high probability as $n \to \infty$. Thus, if we define a universe of quantum assets and model their exclusivity graph as an Erd\H{o}s-R\'{e}nyi graph, the separation between the classical and quantum portfolio values grows as the number of assets that we choose grows. These results provide a route to constructing large asset collections with growing quantum–classical separation. Extending the market structure and pricing mechanism developed below in~\cref{sec:chsh-market-game} to such collections is an interesting direction for future work.

\section{CHSH market game}\label{sec:chsh-market-game}
We now put everything together into a fully functioning market and see how prices, trades, and other financial features emerge. We instantiate our market using a \emph{nonlocal quantum game}, specifically the CHSH game (see~\cref{subsec:chsh-game}). While we develop our quantum market using the CHSH game, the principles behind the market extend to other nonlocal games~\cite{brunner2014bell}, such as other \textsc{xor} games~\cite{cirelson1980quantum,cleve2004consequences,briet2009generalized}, or games based on the $I_{3322}$~\cite{collins2004relevant} or Svetlichny~\cite{svetlichny1987distinguishing} inequalities.

In this section, we systematically build up the details of the game, starting with the specification of the financial contract binding all parties to the rules in~\cref{subsec:contract-specification}, where we also describe the structure of the market. Then, in~\cref{subsec:market-state-supply}, we describe how the market state is produced and how each party characterizes the state, which we call suggestively the "market research phase". We put everything together in~\cref{subsec:strategy-choice-pricing}, presenting the main results on the strategy selection and the pricing mechanism that each party follows. Finally, in order to build intuition for how the market works, we provide a numerical example in~\cref{subsec:numerical-examples}.

\subsection{Financial contract}\label{subsec:contract-specification}
We begin by introducing the \emph{financial contract} for the CHSH market game. The contract describes everything about the market game: the structure of the market, the rules of the game and the winning condition, and the tradable assets. We go through each of these in this section; the financial contract can be thought of as a summary of what we discuss here, which all parties involved sign and bind themselves to. We start by discussing \emph{who} these parties are, that is, defining the \emph{market participants}.

\subsubsection{Market participants}\label{subsubsec:market-participants}
\begin{figure}[tb]
    \centering
    \includegraphics[width=\linewidth]{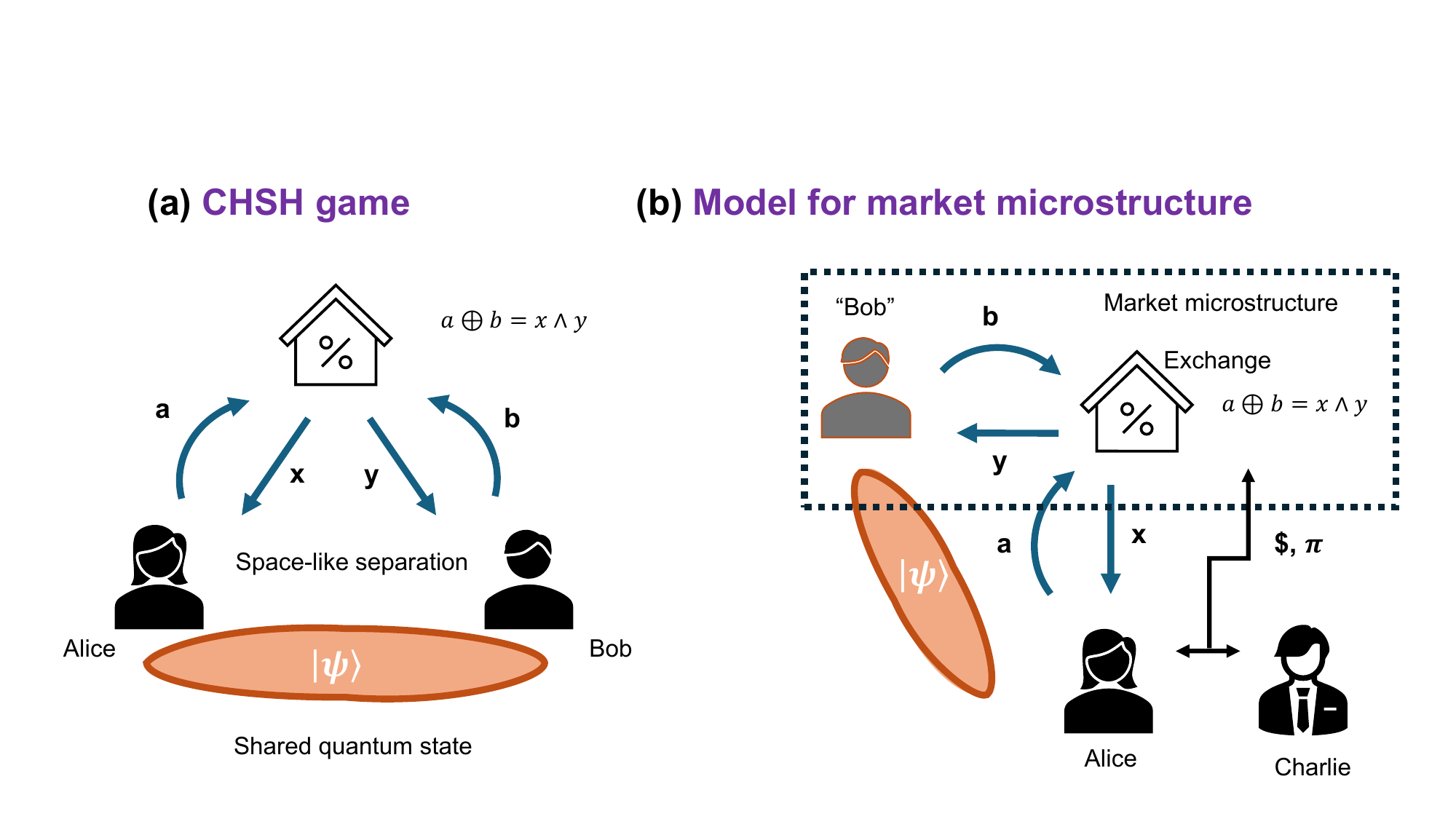}
    \caption{(a) Sketch of the standard CHSH game. Two parties, Alice and Bob, share a quantum state. They receive classical bits $x$ and $y$ and are tasked with returning the classical bits $a$ and $b$ such that a predicate $a \oplus b = x \wedge y$ is satisfied. (b) Sketch of the interpretation of the CHSH game as a market microstructure. The trader Alice shares an entangled state with the exchange, which has subsumed the party we normally call Bob. The actions of Bob are therefore part of the market microstructure and are taken as those in the standard CHSH game. Also depicted is the exchange-mediated trade of outcome-contingent claims between Alice and Charlie, together with the associated cash transfers.}
    \label{fig:chsh-game-setup}
\end{figure}

There are three parties involved in the market: Alice, Charlie, and a neutral third party which we call the \emph{exchange}, which we assume subsumes who we normally call Bob in the CHSH context. For a summary, we refer the reader to~\cref{fig:chsh-game-setup}, which shows the market structure consisting of Alice and the exchange (plus Bob), while omitting the standard, classical interactions with Charlie. Alice and Charlie are the economic counterparties of the trade, and the exchange serves as a trusted referee and settlement authority with various responsibilities that we detail below. We call Alice and Charlie \emph{traders} throughout the rest of the paper, and consider Alice to be on the long/buy side and Charlie on the short/sell side of the trades in the market. Alice wants to maximize her wealth, while Charlie wants to maximize his wealth. Alice takes the long/buy side of the trade, so she pays to play the game and purchase quantum assets to help her win the game. On the other side of the trade is Charlie, who takes the short/sell side. Charlie initially receives some premium from Alice and either keeps the premium or must pay up, depending on the result of the game, that is, whether or not the winning condition (i.e., the predicate) is satisfied.

The exchange has no financial stake in the contract and exists only to facilitate the playing of the game. In addition, the exchange shares the necessary quantum resources with Alice, performs the necessary local measurements on the exchange's part of the quantum resources to play the role of Bob in the usual CHSH game, and enforces any no-communication constraints/spacelike separation required from the quantum game. The exchange therefore has the following responsibilities:
\begin{enumerate}
    \item Acting as an intermediary for communication between Alice and Charlie to agree on a trade price of the game. Alice submits her bid price and Charlie submits his ask price, and these prices are adjusted until agreement is reached, after which the exchange commits both sides to the trade price. If no trade price can be agreed upon, then the exchange aborts the negotiation between Alice and Charlie.
    \item Producing the market state $\rho$. The market state is fixed by the exchange and the traders have complete or partial information about this state (see~\cref{subsec:market-state-supply}). In contrast, in the CHSH game, Alice and Bob are allowed to share a known quantum state as part of their strategy. 
    \item Sending Alice her respective qubit, while retaining Bob's qubit. This simulates the spacelike separation because the exchange, being neutral, abides by the no-communication rule in the usual CHSH game, having no incentive to cheat.
    \item Sampling the input bits $x$ and $y$ according to some distribution and sending $x$ to Alice and retaining $y$ for Bob.
        \item Receiving the output bit $a$ from Alice and producing the output $b$ for Bob using Bob's part of the shared quantum state. Bob's measurement is fixed in advance by the agreed-upon strategy and is carried out upon receipt of the input bit $y$, independently of Alice's answer $a$.
        \item Checking whether the predicate $a \oplus b \overset{?}{=} x \wedge y$ is satisfied and, if so, declaring the winner.
        \item Clearing the settlement: exchanging the appropriate amount of capital between Charlie and Alice if Alice wins and allowing Charlie to keep the premium if Alice loses.
\end{enumerate}

We will discuss these points in more detail throughout this section as we develop the market. As mentioned above, we assume that the exchange is neutral and honest; it does not cheat or deviate from its responsibilities. This allows us to set aside adversarial strategies that either side may take if, for example, the CHSH market game were administered directly between the two sides of the trade. For example, if Charlie were the one producing the market state, sending and receiving the bits, checking the predicate, etc., it would be very easy for him to lie about the outcome of the game and unfairly keep the premium that Alice paid to play the game, even if she ended up winning the game. While there may be interesting cryptographic (classical or quantum) mechanisms that one could devise to enhance the security of markets based on nonlocal games or to implement such a market without a trusted exchange, that is not our focus in this paper.

We also stress that the neutrality of the exchange precludes it from cheating and helping Alice win. The simulated Bob that exists inside of the exchange to facilitate the playing of the CHSH game with Alice operates according to the rules specified by the contract. That is to say, in performing the actions required by the CHSH game part of the market, where Bob receives his input bit and measures his part of the shared quantum state in a pre-specified basis according to the agreed-upon strategy, Bob/the exchange act as if they are spacelike-separated from Alice. This prevents any communication between Alice and Bob after the game has started (i.e., after the input bits have been received), so coordinating the output is precluded. 

\subsubsection{Multi-round CHSH market game and settlement}\label{subsubsec:chsh-market-game-rules}
In this market game, as mentioned above, Alice is playing the CHSH game, facilitated by the exchange, who performs all of the usual actions that Bob and the referee would in the CHSH game where Alice and Bob are space-like separated. We make some modifications and add some financial machinery to make a market out of the CHSH game. First, as stated before, it is only Alice who is playing the game, with the honest exchange simulating a "Bob" doing all of the usual things Bob does in the CHSH game. Second, we allow for more general distributions from which the input $(x,y)$ is drawn and denote the probability distribution by $\pi(x,y)$. This distribution is either known in advance or can be estimated from prior market data or a pre-game calibration phase; the distribution affects the portfolio of quantum assets that Alice constructs before playing. In this work, we assume that the input distribution is known, but we mention in~\cref{sec:discussion} the potential for future work in markets where the distribution is unknown and must be learned.

Furthermore, while in~\cref{subsubsec:market-participants} we described one round of the CHSH game, we assume that the market can be instantiated via the playing of $T$ rounds of the game, where for each round $t \in [T]$, a fresh, independently-prepared copy of the market state and a new, independently-sampled input $(x_t, y_t)$ are provided by the exchange. Then, the per-round payout function is defined as
\begin{equation}
    S_t \coloneqq
    \begin{cases}
        1 & \text{if $a_t \oplus b_t = x_t \wedge y_t$}\\
        0 & \text{otherwise.}
    \end{cases}
    \label{eqn:per-round-payout}
\end{equation}
After playing $T$ rounds, Charlie pays Alice
\begin{equation}
    S = \defsum{t=1}{T}{S_t}.
\end{equation}
Thus, Charlie's maturity settlement cashflow is $-S$, and the exchange acts as the settlement authority to ensure that the payout is settled correctly.

\subsubsection{Listed financial instruments and portfolio mechanism}\label{subsubsec:listed-instruments}
We have now established \emph{who} populates the quantum market game and \emph{what} the game is, so now we specify what the tradable assets are. That is, we define the financial instruments involved in the trade that occurs between Alice and Charlie. Here, we connect the market game with the quantum assets that we defined in~\cref{subsec:quantum-assets-chsh}. We take the winning projectors from~\cref{fig:chsh-excl-graph},
\begin{equation}
    \Pi_k \coloneqq \Pi_{a|x}^A \otimes \Pi_{b|y}^B,\label{eq:def-quantum-assets}
\end{equation}
to be the assets that Alice can buy, where $k \in \mathcal{K}$ and where
\begin{equation}
    \mathcal{K} \coloneqq \{(a,b \mid x,y) \in \{0,1\}^4 : a \oplus b = x \wedge y\}
\end{equation}
denotes the set of input-output bits for the winning combinations. Thus, there are $\abs{\mathcal{K}} = 8$ individual assets. In this market, the rational thing for Alice to do is to purchase and use the two winning-event assets associated with each input pair together, which we call a \emph{bundle}. A bundle for input pair $(x,y)$ thus contains one unit of each asset indexed by $\mathcal{K}_{(x,y)}$, where
\begin{equation}
    \mathcal{K}_{(x,y)} \coloneqq \{(a,b \mid x,y) \in \mathcal{K}\},
\end{equation}
and where $\abs{\mathcal{K}_{(x,y)}} = 2$. For example, a bundle for $(x,y) = (0,0)$ contains one unit of $\Pi_{(0,0|0,0)}$ and one unit of $\Pi_{(1,1|0,0)}$. Before the game begins, Alice purchases $N$ projectors from Charlie (or $N/2$ bundles in total; we assume Alice is rational), subject to her budget $\xi$, where the portfolio is $\vb{w} = (w_k)_{k \in \mathcal{K}}$, where $w_k \in \ZZ_{\geq 0}$ counts units of the individual assets $\Pi_k$, and
\begin{equation}
    \defsum{k \in \mathcal{K}}{}{w_k} = N.
\end{equation}
In a given round, the exchange samples the input $(x,y)$, sends $x$ to Alice, and retains $y$ for Bob in the exchange. The exchange checks whether a matching bundle remains in Alice's inventory and, if so, it authorizes a quantum round, in which Alice and Bob perform the complete local projective measurements
\begin{equation}
    \{\Pi_{0|x}^A, \Pi_{1|x}^A\},\qquad \{\Pi_{0|y}^B, \Pi_{1|y}^B\},
\end{equation}
respectively, on a fresh copy of the market state. Each party's measurement depends only on its own input and the joint outcomes comprise all four possible $(a,b) \in \{0,1\}^2$. If the output satisfies $a \oplus b = x \wedge y$, then exactly one claim in the bundle pays one unit of the numeraire; otherwise, neither claim pays. Thus, the bundle contributes exactly the per-round payout $S_t$ to the aggregate settlement paid at the end of the game, as defined in~\cref{eqn:per-round-payout}. Both asset units are removed from the inventory after the round, regardless of its outcome. Each quantum round therefore uses one bundle and one shared copy of the market state, and if no matching bundle remains, the round is played using the prespecified classical strategy. Any unused units expire without payment after the final round. We also note that the exchange enforces these inventory and settlement rules, as we assume that the exchange is trusted to implement the prescribed local procedures and prevent unauthorized use of the quantum strategy. Developing cryptographic mechanisms to enforce these rules without a trusted exchange is an interesting direction for future work.

\subsection{Market state supply and information access}\label{subsec:market-state-supply}
Essential to our market is the market state, which we defined above in~\cref{def:market-density-operator}. This state is prepared by the exchange, and one qubit is distributed to Alice and the other qubit is kept by the exchange to simulate the "Bob" side of the CHSH game. However, exact state preparation is not a realistic assumption in either current or future hardware proposals. Hence, we assume that the state preparation is imperfect and the exchange prepares the relevant entangled state (e.g., the Bell state in the CHSH game) with probability $1-\eta$ and the maximally-mixed state with probability $\eta$. As we are basing our market on the CHSH game, we take the market state to be
\begin{equation}
    \rho(\eta) = (1-\eta)\ket{\Phi^+}\bra{\Phi^+} + \frac{\eta}{4}I_4,\label{eq:noisy-market-state}
\end{equation}
where $\eta \in [0,1]$, $\ket{\Phi^+} = \frac{1}{\sqrt{2}}(\ket{00} + \ket{11})$ is a Bell state, and $I_4$ is the identity matrix on two qubits. Such a state is called a \emph{Werner state} and is of independent interest, especially in the contextuality literature~\cite{werner1989quantum,horodecki1995violating,barrett2002nonsequential,acin2006grothendiecks}. We choose the Werner state as it is a simple state with which we can demonstrate the dynamics of the market we develop; there is nothing fundamental about this choice, and interesting future work could consider other classes of states. The value $\eta$ is related to, in contextuality terms, the \emph{local fraction} and the \emph{contextual fraction}~\cite{elitzur1992quantum,abramsky2017contextual}, which tells us how much our model is weighted towards a local/non-contextual model versus a nonlocal/contextual model, but we do not use this terminology here. Instead, we refer to $\eta$ simply as a noise parameter. Thus, with probability $1-\eta$, Alice has access to an entangled quantum state and with probability $\eta$, the state only contributes white noise. Thus, a rational trader must further discount their investment by an amount that depends on $\eta$. We consider three scenarios of knowledge of $\eta$:
\begin{enumerate}
    \item Both Alice and Charlie know $\eta$ exactly;
    \item Both Alice and Charlie agree on an incorrect value for $\eta$;
    \item Alice and Charlie disagree on the value for $\eta$.
\end{enumerate}
In order to get access to this information, we allow for a "market research" phase before the game begins. That is, each side of the trade is allowed to purchase copies of the quantum market state $\rho(\eta)$, with which they are allowed to characterize the state using any protocol of their choosing, such as tomography, parameter estimation, or self-testing~\cite{supic2020self-testing}. To be in scenario 1 above, either the exchange simply broadcasts the true value of $\eta$ or Alice and Charlie each use the same characterization protocol and/or device and, in the limit of an infinite number of copies of the state, arrive at the same (correct) value for $\eta$. To be in scenario 2, Alice and Charlie use the same flawed or noisy device that, in the asymptotic limit of the number of states, provides an incorrect estimate for $\eta$. Finally, scenario 3 can arise in several regimes, such as when Alice and Charlie have access to a different number of copies of the state, use characterization devices/protocols of differing quality, or both, and thus arrive at different estimates for $\eta$. In fact, even with the same (finite) number of copies and the same characterization device/protocol, they can still arrive at different values for $\eta$ simply from the random classical and quantum fluctuations in the Werner state, finite-sample error, etc. Thus, scenario 3 is arguably the most plausible one to consider.

To realize this information access in a financial way, the exchange \emph{charges} the traders some amount of capital per copy of the state. Such a model is realistic, as it could cost some amount of money for the exchange to produce quantum states, especially entangled quantum states. Then, there arises a tradeoff between the quality of information each side has on the true state of the market and the amount each side can invest to actually play the game. For example, Alice may spend a lot of her capital in buying copies of the market state to get access to better information, but at the cost of being able to purchase fewer quantum assets to play the game. This idea is well-motivated financially as well, as even in classical markets, market participants must pay for high-quality data that can better inform their investment decisions, but at the cost of subsequently having less capital with which to invest.

Finally, we highlight that the market state is assumed not to change during the market research phase or the playing of the game. That is, the research copies and the game-round copies are drawn from the same stationary preparation procedure. We leave to future work how a changing market state affects the pricing dynamics and strategy choice.

\subsection{Multi-round strategy choice and pricing analysis}\label{subsec:strategy-choice-pricing}
We now analyze strategy selection and valuation of the quantum assets. As we will see, the structure of the game determines the price of the quantum portfolio from first principles. Hence, we provide a novel perspective on quantum games in terms of quantum assets, and, in particular, we connect this with the exclusive quantum assets that we introduced in~\cref{sec:assets}.

\subsubsection{Fixed quantities and portfolio cost}\label{subsubsec:strategy-space}
We summarize what quantities and objects are fixed by the contract and/or the exchange. First, the exchange fixes the input distribution $\pi(x,y)$, which we assume is known by everyone. Next, the predicate that must be satisfied in order for Alice to win the game is stated and agreed upon in the contract. Next, all parties agree to the total number of rounds $T$ of the CHSH game that will be played and how the final settlement will be reached (i.e., the payout per round, the aggregate payout after $T$ rounds, and the cashflow before and after all rounds are played). All parties also agree to the form of the market state $\rho(\eta)$, even if the true value of $\eta$ is not known by one or both sides of the trade. In the case of the CHSH market we are developing, both sides agree that $\rho(\eta)$ is a Werner state of the form given in~\cref{eq:noisy-market-state}.

Given all of the information that each side has access to, each side must assess the value of participating in the market with a specified portfolio $\vb{w} = (w_k)_{k \in \mathcal{K}}$. These assessments determine the prices Alice and Charlie are willing to pay or accept. For an agreed trade, let $c_k$ denote the upfront charge per unit of the individual asset $\Pi_k$. The total amount Alice pays Charlie, through the exchange, is then
\begin{equation}
    C(\portvec) = C_0 + \underbrace{\defsum{k \in \mathcal{K}}{}{c_k \port_k}}_{C_Q(\vb{w})},\label{eq:portfolio_cost}
\end{equation}
where $C_0$ is an entry premium and $C_Q(\vb{w})$ is the additional charge for the quantum-asset portfolio. Recall that Alice purchases the quantum assets in bundles, where each bundle contains one unit of each of the two assets indexed by $\mathcal{K}_{(x,y)}$. Thus, we can consider the price of a bundle, which is $\defsum{k \in \mathcal{K}_{(x,y)}}{}{c_k}$.

Note that we include the entry premium $C_0$ because if Alice decided to not buy any quantum assets and just played the game classically, she would have a free positive expected payoff. Note, however, that $C_0$ is still charged even if Alice buys enough projectors such that she is guaranteed to not have to use the classical strategy. If Alice has a budget $\xi$, then
\begin{equation}
    C(\portvec) \leq \xi.
\end{equation}
We will derive the price below, which is the ultimate goal of our analysis of quantum assets in this paper.

\subsubsection{Per-round win probabilities}\label{subsubsec:per-round-win-probs}
We now examine the per-round winning probabilities for each of the classical and quantum strategies given the market state $\rho(\eta)$ and input distribution $\pi(x,y)$, which we shorten to $\pi$ below. We start with the conditional classical per-round winning probability $\Pr_\mathrm{cl}[\mathrm{win}|x,y]$ of a classical strategy, and the maximal per-round winning probability
\begin{equation}
    p_\mathrm{win}^\mathrm{cl}(\pi) \coloneqq \max_{\text{cl. str.}}{\defsum{x,y}{}{\pi(x,y)\Pr_\mathrm{cl}[\mathrm{win}|x,y]}},
\end{equation}
where we are maximizing over classical strategies (which we denote as "cl. str.") satisfying the no-communication constraint. If we take $\pi(x,y)$ to be uniform, then $p_\mathrm{win}^\mathrm{cl}(\pi)$ reduces to the familiar $3/4$ success probability of the best classical strategy. Under a quantum strategy, given the market state $\rho(\eta)$, and defining
\begin{equation}
    \mathcal{M} = \parens*{\{\Pi_{a|x}^A\}_{a=0,1;x=0,1}, \{\Pi_{b|y}^B\}_{b=0,1;y=0,1}},
\end{equation}
as a family of local measurements, the maximal per-round winning probability is
\begin{equation}
    p_\mathrm{win}^\mathrm{q}(\eta,\pi) = \max_{\mathcal{M}}\defsum{x,y}{}{\pi(x,y)\defsum{a \oplus b = x \wedge y}{}{\Tr[\rho(\eta)\parens*{\Pi_{a|x}^A \otimes \Pi_{b|y}^B}]}}.
\end{equation}
Given the form of the quantum market state, which is a Bell state with probability $1-\eta$ and the maximally mixed state with probability $\eta$, this reduces to
\begin{equation}
    p_\mathrm{win}^\mathrm{q}(\eta,\pi) = (1-\eta)p_\mathrm{win}^\mathrm{q}(0,\pi) + \eta p_\mathrm{win}^\mathrm{rand},
\end{equation}
where $p_\mathrm{win}^\mathrm{rand}$ is the winning probability using a quantum strategy with the maximally mixed state. For the CHSH game, this value is $p_\mathrm{win}^\mathrm{rand} = 1/2$ for every allowed pair of rank-one local projective measurements, which is worse than the best classical strategy. Here, $p_\mathrm{win}^\mathrm{q}(0,\pi)$ is the winning probability using the optimal quantum strategy on the entangled quantum state.

Note that the maximal classical and quantum winning probabilities correspond to the graph invariant quantities of the exclusivity graph induced by the set of winning projectors $\{\Pi_{a|x}^A \otimes \Pi_{b|y}^B\}$. Given the set $\{\Pi_{a|x}^A \otimes \Pi_{b|y}^B\}$, the asset exclusivity graph $G$ was introduced in~\cref{subsec:exclusive-assets}. Then, given $G$, the best classical winning probability is given by the independence number $\alpha(G)$ of the graph:
\begin{equation}
    p_\mathrm{win}^\mathrm{cl} = \frac{\alpha(G)}{4},\label{eqn:classical-win-prob}
\end{equation}
where the normalization factor is equal to the size of the input space~\cite[Sec. 3.5]{scarpa2013quantum}. Likewise, the best quantum winning probability is given by the \lovasz theta number:
\begin{equation}
    p_\mathrm{win}^\mathrm{q}(0) = \frac{\vartheta(G)}{4}.
\end{equation}
Thus, the winning probability for a quantum strategy is
\begin{equation}
    p_\mathrm{win}^\mathrm{q}(\eta) = (1-\eta)\frac{\vartheta(G)}{4} + \frac{\eta}{2}.\label{eqn:quantum-win-prob}
\end{equation}
Given this, we define the \emph{quantum edge}:

\begin{definition}[Quantum edge per CHSH round]\label{def:quantum-edge}
    Let $\eta \in [0, 1]$ and $\pi \equiv \pi(x,y)$ be a probability distribution on inputs $x,y \in \{0,1\}$ to the CHSH market game. Assuming access to quantum resources, we define the \emph{quantum edge} per round as
    \begin{equation}
        \Delta(\eta,\pi) \coloneqq p_\mathrm{win}^\mathrm{q}(\eta,\pi) - p_\mathrm{win}^\mathrm{cl}(\pi).
    \end{equation}
\end{definition}

Note that there is a critical value for $\eta$, which we call $\eta_{\text{crit}}$, above which it is no longer beneficial to use a quantum strategy. We formalize this in the following theorem:

\begin{thm}[Critical probability]\label{thm:critical-probability}
    Given the quantum edge $\Delta(\eta, \pi)$ defined in~\cref{def:quantum-edge} and taking $\pi$ to be uniform over inputs $x,y \in \{0,1\}$ to the CHSH game, there is a critical value for $\eta$,
    \begin{equation}
        \eta_\mathrm{crit} = \frac{\vartheta(G) - \alpha(G)}{\vartheta(G) - 2} = 1 - \frac{1}{\sqrt{2}},
    \end{equation}
    where $G = \mathrm{Ci}_8(1,4)$, such that $\Delta(\eta,\pi) > 0$ if $\eta < \eta_\mathrm{crit}$, $\Delta(\eta,\pi) = 0$ at $\eta = \eta_\mathrm{crit}$, and $\Delta(\eta,\pi) < 0$ for $\eta > \eta_\mathrm{crit}$.
\end{thm}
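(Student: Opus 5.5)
The plan is to write the quantum edge explicitly as an affine function of $\eta$ and read off its unique root and sign pattern. For uniform $\pi$, combine \cref{eqn:classical-win-prob} and \cref{eqn:quantum-win-prob}, i.e.\ $p_\mathrm{win}^\mathrm{cl} = \alpha(G)/4$ and $p_\mathrm{win}^\mathrm{q}(\eta) = (1-\eta)\vartheta(G)/4 + \eta/2$, with $G = \mathrm{Ci}_8(1,4)$. Substituting into \cref{def:quantum-edge} gives
\begin{equation*}
    \Delta(\eta,\pi) = \frac{\vartheta(G) - \alpha(G)}{4} - \eta\,\frac{\vartheta(G) - 2}{4}.
\end{equation*}

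The second step establishes the sign structure. The slope $-(\vartheta(G)-2)/4$ is strictly negative because $\vartheta(G) = 2+\sqrt{2} > 2$. Hence $\Delta$ is strictly decreasing in $\eta$, and it has exactly one zero, at
\begin{equation*}
    \eta_\mathrm{crit} = \frac{\vartheta(G)-\alpha(G)}{\vartheta(G)-2}.
\end{equation*}
Strict monotonicity then gives $\Delta > 0$ for $\eta < \eta_\mathrm{crit}$ and $\Delta < 0$ for $\eta > \eta_\mathrm{crit}$, which are the three claimed cases.

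The third step evaluates the closed form. Insert $\alpha(\mathrm{Ci}_8(1,4)) = 3$ and $\vartheta(\mathrm{Ci}_8(1,4)) = 2+\sqrt{2}$ from \cref{subsec:exclusivity-graph-chsh}:
\begin{equation*}
    \eta_\mathrm{crit} = \frac{\sqrt{2}-1}{\sqrt{2}} = 1 - \frac{1}{\sqrt{2}}.
\end{equation*}
This value lies in $(0,1)$, so all three regimes occur inside the allowed range $[0,1]$.

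The main point needing care is the formula for $p_\mathrm{win}^\mathrm{q}(\eta)$. The decomposition $(1-\eta)p_\mathrm{win}^\mathrm{q}(0) + \eta/2$ requires that a single measurement family $\mathcal{M}$ simultaneously optimizes the Bell part and gives $1/2$ on the maximally mixed part. We can take this from the paper's earlier claim that $p_\mathrm{win}^\mathrm{rand} = 1/2$ for every allowed pair of rank-one local projective measurements. Given that claim, the maximum over $\mathcal{M}$ of the convex combination equals $(1-\eta)\max_{\mathcal{M}} p_\mathrm{win}(\Phi^+) + \eta/2$, and the first term is $\vartheta(G)/4$ by \cref{thm:max-quantum-portfolio-value}, which is attained by the $\theta = \pi/8$ strategy. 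If one instead wanted to allow degenerate measurements, I would first verify that these cannot beat $1/2$ on the mixed part while still achieving the Tsirelson value on the Bell part. Once the formula is accepted, the rest is the elementary algebra above.
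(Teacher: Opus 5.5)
Your proposal is correct and follows the paper's proof: you write $\Delta(\eta,\pi)=(1-\eta)\vartheta(G)/4+\eta/2-\alpha(G)/4$, set it to zero, solve for $\eta_\mathrm{crit}$, and substitute $\alpha(G)=3$ and $\vartheta(G)=2+\sqrt{2}$; your explicit note that the slope $-(\vartheta(G)-2)/4$ is negative supplies the monotonicity that the paper's sign argument leaves implicit. One correction to your closing aside: admitting degenerate measurements would not just need a check but would falsify the $\eta>\eta_\mathrm{crit}$ case, because $\Pi^A_{0|x}=\Pi^B_{0|y}=I$ reproduces the deterministic classical strategy and wins with probability $3/4$ on every state, which forces $\Delta\ge 0$ for all $\eta$, so the restriction to rank-one projectors that you and the paper adopt is essential.
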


\begin{proof}
    Recall the expressions for $p_\mathrm{win}^\mathrm{cl}$ in~\cref{eqn:classical-win-prob} and $p_\mathrm{win}^\mathrm{q}(\eta)$ in~\cref{eqn:quantum-win-prob}. The critical probability occurs when $p_\mathrm{win}^\mathrm{cl} = p_\mathrm{win}^\mathrm{q}(\eta)$, which is given by
    \begin{align}
        \frac{\alpha(G)}{4} &= (1-\eta)\frac{\vartheta(G)}{4} + \frac{\eta}{2}
        \implies \eta_{\text{crit}} = \frac{\vartheta(G) - \alpha(G)}{\vartheta(G) - 2}.
    \end{align}
    Substituting in the known values for $\vartheta(G)$ and $\alpha(G)$ then yields the value in the theorem statement. If $\eta > \eta_{\text{crit}}$, then $p_\mathrm{win}^\mathrm{cl} > p_\mathrm{win}^\mathrm{q}(\eta)$ and $\Delta(\eta, \pi) < 0$, thus losing the quantum edge. Analogously, if $\eta < \eta_\mathrm{crit}$, then $p_\mathrm{win}^\mathrm{cl} < p_\mathrm{win}^\mathrm{q}(\eta)$ and $\Delta(\eta,\pi) > 0$, and the quantum edge is retained. Finally, if $\eta = \eta_\mathrm{crit}$, then $p_\mathrm{win}^\mathrm{cl} = p_\mathrm{win}^\mathrm{q}(\eta)$ and thus $\Delta(\eta,\pi) = 0$.
\end{proof}

In the case that $p_\mathrm{win}^\mathrm{cl} > p_\mathrm{win}^\mathrm{q}$, it actually makes more sense to use a classical strategy, rather than a quantum one. The critical probability is not unique to our CHSH-based market and is a well-studied quantity in the contextuality literature~\cite{vertesi2008more,acin2006grothendiecks}, where the related quantity is called the "visibility."

\subsubsection{Multi-round analysis}\label{subsubsec:multi-round-analysis}
We now use the per-round winning probabilities to derive the bid and ask prices for Alice and Charlie, respectively, across the full $T$-round CHSH market game. We will see how starting with the exclusivity graph $G$, which we got from the set of winning projectors that make up the quantum assets in our CHSH market, and the noise parameter $\eta$ in the quantum market state, we can arrive at interesting dynamics between the traders that give rise to emergent financial properties like a bid-ask spread. Thus, from first principles, we derive a coherent pricing model for quantum assets in a quantum market.

Let $T \geq 1$ be the number of rounds of the CHSH game to be played according to the rules stated in the contract (see~\cref{subsec:contract-specification}). Let $S \in \{0, 1, \ldots, T\}$ be the realized number of rounds of the CHSH game that Alice wins. Again, we assume that the input distribution is uniform: $\pi(x,y) = \frac{1}{4}$ for all $(x,y)$; analogous results can be derived by allowing a more general input distribution. Before the $T$-round CHSH market game begins, at $t=0$, Alice purchases her portfolio of quantum assets from Charlie and pays the fee $C_0$ for a total cost of $C(\portvec)$. The portfolio is $\portvec = (\port_k)_{k \in \mathcal K}$ and contains projectors $\{\Pi_k\}$, where each projector costs $c_k$; we derive the price below. Thus, the total cost to buy the portfolio of quantum assets is $C_Q(\portvec) = \defsum{k}{}{c_k \port_k}$, as in~\cref{eq:portfolio_cost}. At maturity time $t = T$ (i.e., after all $T$ rounds of the CHSH game have been played), Charlie pays Alice the realized settlement $S$.

If no bundle remains for input pair $(x,y)$, the round uses the prescribed classical strategy. Otherwise, one unit of each asset indexed by $\mathcal{K}_{(x,y)}$ is available for the quantum round. We now analyze the expected number of rounds that Alice remains "quantum-solvent," that is, she has enough of the two winning projectors indexed by $\mathcal{K}_{(x,y)}$ to use a quantum strategy on input $(x,y)$. We define the random variable $Q_T$ to quantify the number of rounds in which Alice can use the quantum strategy:
\begin{equation}
    Q_T \coloneqq \defsum{t=1}{T}{\vb{1}\{\text{both projectors for $(x_t,y_t)$ are available at $t$ in $\portvec_t$}\}},
\end{equation}
where $\portvec_t$ is the portfolio at round $t$. Denote by $X_{(x,y)}(T)$ the number of times that input $(x,y)$ arrives in all $T$ rounds, and denote $W_{(x,y)}(\portvec)$ as the number of times both corresponding  projectors are available in the portfolio $\portvec$. That is, $W_{(x,y)}(\vb{w})$ is the initial number of bundles of the two winning projectors for input $(x,y)$. Hence, we have
\begin{equation}
    Q_T = \defsum{x,y \in \{0,1\}}{}{\min\{X_{(x,y)}(T), W_{(x,y)}(\portvec)\}}.
\end{equation}
We can model $X_{(x,y)}(T)$ as a binomial random variable. In each round, ${(x,y)}$ arrives independently with probability $1/4$. The total number of times that ${(x,y)}$ appears across the $T$ rounds is then
\begin{equation}
    X_{(x,y)}(T) \sim \text{Bin}(T, 1/4).
\end{equation}
Using this, we can find the expectation value of $Q_T$:
\begin{align}
    \EE[Q_T] &= \defsum{x,y \in \{0,1\}}{}{\EE\lb \min\{X_{(x,y)}(T), W_{(x,y)}(\vb{w})\}\rb}. 
\end{align}
We additionally assume equal initial bundle inventories $W_{(x,y)}(\vb{w}) = W$ for every input pair $(x,y)$. Likewise, the $X_{(x,y)}(T)$ are identically distributed\footnote{Note that the $X_{(x,y)}(T)$ are, however, correlated as if we receive a lot of input $(x,y)$, then we have fewer occurrences of the other inputs. Regardless, this does not affect our calculations.} for each $(x,y)$, so we take $X \equiv X_{(x,y)}(T)$. Thus, we have
\begin{align}
    \EE[Q_T] &= 4 \cdot \EE[\min\{X, W\}],
\end{align}
with $X \sim \text{Bin}(T, 1/4)$. Then, we calculate $\EE[\min\{X, W\}]$ as
\begin{align}
    \EE[\min\{X, W\}] &= \defsum{j=0}{W}{j\Pr[X=j]} + W\defsum{j=W+1}{T}{\Pr[X=j]}.
\end{align}
As $X \sim \text{Bin}(T, 1/4)$, we have
\begin{equation}
    \Pr[X=j] = \binom{T}{j}\lp \frac{1}{4} \rp^j\lp \frac{3}{4} \rp^{T-j},
\end{equation}
hence, we have
\begin{equation}
    \EE[Q_T] = 4 \cdot \lp \defsum{j=0}{W}{j\binom{T}{j}\lp \frac{1}{4} \rp^j\lp \frac{3}{4} \rp^{T-j}} + W\defsum{j=W+1}{T}{\binom{T}{j}\lp \frac{1}{4} \rp^j\lp \frac{3}{4} \rp^{T-j}} \rp.\label{eqn:expected-quantum-solvency}
\end{equation}
Now, we combine the expected number of quantum rounds with the winning probabilities derived in~\cref{subsubsec:per-round-win-probs}. Since the inputs are uniformly distributed and the initial inventory is the same for each input pair, every input pair contributes equally to the expected numbers of quantum and classical rounds. As such, the expected total payout is 
\begin{align}
    \EE[S \mid \eta] &= \EE_\pi[Q_T] \cdot p_\mathrm{win}^\mathrm{q}(\eta) + (T - \EE[Q_T]) \cdot p_\mathrm{win}^\mathrm{cl}\\
    &= Tp_\mathrm{win}^\mathrm{cl} + (\underbrace{p_\mathrm{win}^\mathrm{q}(\eta) - p_\mathrm{win}^\mathrm{cl}}_{\Delta(\eta)})\EE[Q_T],\label{eq:expectedpayout}
\end{align}
where $\EE[Q_T]$ is as given above, $p_\mathrm{win}^\mathrm{cl}$ and $p_\mathrm{win}^\mathrm{q}(\eta)$ are given in~\cref{eqn:classical-win-prob,eqn:quantum-win-prob}, respectively, and we highlight the difference between the probabilities as the quantum edge $\Delta(\eta)$ defined in~\cref{def:quantum-edge}.

We now calculate the \emph{net present value} (NPV), which takes into account the full cashflow for both sides of the trade. For Alice, this is the expected return given her estimate/belief for the value of $\eta$, which we denote as $\eta_A$, minus the upfront capital that she invests, which is the cost of the quantum assets. Thus, we have\footnote{Recall from~\cref{subsec:classical-assets} that, in general, a net risk-free return $r$ over the contract period would discount the expected settlement by $1+r$, but in this work we ignore the discounting rate and take $r=0$ to focus on the novel aspects of the quantum market dynamics.}
\begin{equation}
    \mathrm{NPV}_A = \EE[S \mid \eta_A] - C(\portvec).
\end{equation}
From Charlie's perspective, the net present value is
\begin{equation}
    \mathrm{NPV}_C = C(\portvec) - \EE[S \mid \eta_C],
\end{equation}
where $\eta_C$ is Charlie's estimate for the true value of $\eta$. In order for the trade to be worthwhile for either side, the NPV must be greater than or equal to 0. Thus, solving for $C(\vb{w})$ gives the maximum/minimum cost of the portfolio of assets that either side is willing to pay/receive. For Alice, the maximum amount that she is willing to pay for her portfolio of quantum assets is
\begin{equation}
    C_A^{\max}(\portvec) = \EE[S \mid \eta_A],
\end{equation}
which, in other words, is the \emph{bid price}. For Charlie, the minimum amount that he is willing to sell the set of quantum assets to Alice for is
\begin{equation}
    C_C^{\min}(\portvec) = \EE[S \mid \eta_C],
\end{equation}
which is the \emph{ask price}. A trade happens when both sides agree to a price $C(\portvec)$ such that
\begin{equation}
    C_C^{\min}(\portvec) \leq C(\portvec) \leq C_A^{\max}(\portvec),
\end{equation}
which forms the \emph{trade condition}. Assume for simplicity that the cost-per-asset $c$ is the same for all of the quantum assets. If Alice purchases $N > 0$ assets, then the realized price of each quantum asset is simply
\begin{equation}
    c = \frac{C_Q(\portvec)}{N},
\end{equation}
where, recall, we decomposed the total cost Alice pays to play the game as $C(\vb{w}) = C_0 + C_Q(\vb{w})$, where $C_0 = Tp_\mathrm{win}^\mathrm{cl}$ and where
\begin{equation}
    C_Q(\vb{w}) = \defsum{k \in \mathcal{K}}{}{w_kc_k}.
\end{equation}
In terms of bundles, this means that each two-asset bundle costs $2c$. This thus gives us a first-principles pricing mechanism for quantum assets when the market structure is based on a nonlocal quantum game.

Note that the catalyst for a spread in the bid and ask prices is disagreement over $\eta$, which appears in the market state $\rho(\eta)$. Recall from above that we considered three scenarios. In the first scenario, we assume that both Alice and Charlie know the true value of $\eta$ exactly. In the second, we assume they both estimate $\eta$ to be $\eta_A \neq \eta$, that is, they agree on an incorrect value for the noise. In the third scenario, Alice estimates $\eta$ to be $\eta_A$ and Charlie estimates it to be $\eta_C \neq \eta_A$. In the first and second scenarios, both Alice and Charlie agree on the expected winning probability and thus on the expected payoff, so they converge to the same price. They therefore both regard the trade at their common valuation as fair.

In the third scenario, there is belief disagreement with respect to the market state. Here, a natural bid-ask spread opens up from the differing fair prices that Alice and Charlie arrive at given their estimates $\eta_A$ and $\eta_C$, respectively, of the true value of $\eta$. This affects the bid and ask prices only in places where $\eta$ arises, namely, in the expression for the quantum winning probability $p_\mathrm{win}^\mathrm{q}(\eta)$. In particular, a trade only occurs when $\eta_A \leq \eta_C$, under the assumption that Alice is always the one going long on the trade. If, however, we change the market structure to allow either side to go long or short, based on their estimate for $\eta$, then trades can happen either when $\eta_A \leq \eta_C$ (Alice goes long, Charlie goes short) or when $\eta_A > \eta_C$ (Alice goes short, Charlie goes long). One way to realize Alice going short is that Alice and Charlie switch roles, where Charlie receives half of the quantum state and plays the CHSH game with Bob, though we leave further details to allowing shorting to future work. 

Such a difference in estimates can arise from several sources. One example is an asymmetry in the number of states that the traders have during the market research phase, as we alluded to earlier. That is, if we charge a trader for every copy of the market state that they use for estimating $\eta$, then if one side buys more copies, they will likely get a different (and better) estimate of $\eta$. At the same time, they now have less capital to dedicate towards the actual playing of the CHSH market game, so there is an interplay between information access and amount of capital that is able to be put up (which, in turn, decreases the expected return). We do not quantify this, though we leave it to future work to see how this tradeoff plays out.

\subsection{Numerical example}\label{subsec:numerical-examples}
We now give an illustrative numerical example demonstrating how our CHSH quantum market game works. Our goal with this example is to demonstrate clearly the cashflow, the details of how the bid and ask prices are obtained, the natural emergence of a bid-ask spread, and how changing the variables that go into the CHSH market game affects the resulting economics of the trade.

We start by stating some assumptions and simplifications that allow us to focus on the aforementioned features of the market. First, we assume that the market research has already been done by both parties, and each has obtained their estimate of $\eta$: Alice estimates $\eta_A$ and Charlie estimates $\eta_C \neq \eta_A$ (so we are in the third scenario of knowledge about $\eta$, where we have a belief disagreement). We do not account for the cost of buying copies of the market state for simplicity. Furthermore, we assume that Alice has purchased enough market research states to give her confidence that she has a good estimate; there is likely an interesting adaptive strategy that Alice can use that depends on her uncertainty of her estimate, though we leave this to future work. We also assume that the market state does not change throughout the game, that is, each copy of the market state has the form in~\cref{eq:noisy-market-state}, where $\eta$ is unknown to both parties. Recall that each quantum round consumes one matching two-asset bundle and one shared copy of the market state. We assign the following values, which are known to both parties and agreed upon per the financial contract:
\begin{itemize}
    \item $T = 800$ rounds of the CHSH game
    \item $N = 1600$ total winning projectors $\Pi_k$ (see~\cref{eq:def-quantum-assets}) are purchased by Alice (equivalently, 800 bundles, each containing the two winning projectors for each $(x,y)$)
    \item $\pi(x,y) = \frac{1}{4}$, so each input has an equal probability of appearing, so Alice buys $\port_k = 1600 \cdot \frac{1}{8} = 200$ of each of the 8 winning projectors in~\cref{eq:def-quantum-assets}. Equivalently, she buys 200 of each bundle corresponding to input $(x,y)$, each of which has 200 of each winning asset.
\end{itemize}
Assume that Alice estimates $\eta$ to be $\eta_A = 0.10$, while Charlie estimates it to be $\eta_C = 0.15$. Thus, Charlie believes the market state is \emph{less} "quantum" than Alice believes it to be. Since we only care about computing bid and ask prices to ascertain whether or not a trade will occur, the true value of $\eta$ does not matter. If we were to analyze the actual outcome of playing the game, then it would matter.

We know that the success probabilities under the classical and prescribed quantum strategies are $p_\mathrm{win}^\mathrm{cl} = 0.75$ and
\begin{equation}
    p_\mathrm{win}^\mathrm{q}(\eta) = \frac{2+\sqrt{2}}{4} \cdot (1-\eta) + \eta \cdot p_\mathrm{win}^\mathrm{rand} \approx 0.853553 \cdot (1-\eta) + \frac{\eta}{2},
\end{equation}
respectively, so Alice finds
\begin{equation}
    p_\mathrm{win}^\mathrm{q}(\eta_A) \approx 0.818198
\end{equation}
and Charlie finds
\begin{equation}
    p_\mathrm{win}^\mathrm{q}(\eta_C) \approx 0.80052.
\end{equation}
The expected payout is then given by~\cref{eq:expectedpayout}. Using the expression in~\cref{eqn:expected-quantum-solvency}, we find $\EE[Q_T] = 780.465$, so substituting in the appropriate values for $\eta_A$ and $\eta_C$, we find
\begin{align}
    \EE[S \mid \eta_A] &\approx 653.23\\
    \EE[S \mid \eta_C] &\approx 639.43.
\end{align}
Now, we calculate the net present value for each party, taking into account the full cashflow of the trade. For Alice, her net present value is
\begin{equation}
    \mathrm{NPV}_A = \EE[S \mid \eta_A] - C(\portvec),
\end{equation}
and Charlie's net present value is
\begin{equation}
    \mathrm{NPV}_C = C(\portvec) - \EE[S \mid \eta_C].
\end{equation}
We can now determine the bid and ask prices. Alice participates in a trade if and only if $\mathrm{NPV}_A \geq 0$, so
\begin{equation}
    C_A^{\max}(\portvec) = \EE[S \mid \eta_A].
\end{equation}
$C_A^{\max}(\portvec)$ is the maximum total portfolio cost that Alice is willing to pay to Charlie, which, in other words, is the maximum ask price Alice would be willing to accept. Likewise, for Charlie, he will participate in the trade if and only if $\mathrm{NPV}_C \geq 0$ as well, so we have
\begin{equation}
    C_C^{\min}(\portvec) = \EE[S \mid \eta_C],
\end{equation}
where $C_C^{\min}(\portvec)$ is the minimum total portfolio cost he will accept from Alice to participate in the trade, which is the minimum bid price he would be willing to accept. Computing each of these values using the numerical values we have found, we get
\begin{align}
    C_A^{\max}(\portvec) &\approx 653.23,\\
    C_C^{\min}(\portvec) &\approx 639.43.
\end{align}
We decompose these into the classical premium $C_0 = Tp_\mathrm{win}^{\mathrm{cl}} = 800 \times 0.75 = 600$ and the "quantum upgrade" $C_Q(\vb{w})$. Subtracting $C_0 = 600$ from the two bounds above, we find the bid-ask spread for the quantum assets to be 39.43--53.23, so the fair quantum upgrade cost $C_Q(\vb{w})$ for the trade lies in the interval
\begin{equation}
    39.43 \leq C_Q(\vb{w}) \leq 53.23.
\end{equation}
A natural choice would be the midpoint of the bid and ask prices:
\begin{equation}
    C_Q(\vb{w}) = \frac{39.43+53.23}{2} = 46.33,
\end{equation}
or
\begin{equation}
    c = \frac{46.33}{1600} \approx 0.0290
\end{equation}
per asset, or $2c \approx 0.0579$ per bundle. This trade is driven by disagreement over the true value of $\eta$, showing how belief disagreement in such a market can lead to financially rational behavior from both parties.

\section{Discussion}\label{sec:discussion}
In this work, we introduced a quantum market based on nonlocal games and developed a graph-theoretic framework for pricing exclusive quantum assets. Within this framework, classical and quantum strategies are governed by different graph-invariant quantities, leading to a quantum-classical separation in attainable payoffs. We then realized this framework in a CHSH-based market with a trusted exchange, noisy Werner states describing the state of the market, and tradable objects based on projective measurements. This yields a first-principles pricing mechanism that naturally leads to a bid-ask spread through asymmetric beliefs about the market state. By connecting our ideas to previous work by Bao and Rebentrost~\cite{bao2023fundamental} and Hughston and S\'anchez-Betancourt~\cite{hughston2024valuation}, we hope that this work furthers the development of research on  potential future markets populated by quantum assets.

One may ask what the motivation is for considering such a market. One motivation, offered by Bao and Rebentrost~\cite{bao2023fundamental}, is that if quantum devices are so ubiquitous in the future, then situations may arise where financial decisions are made based on measuring quantum states. As such, it would be natural for a quantum market to develop, and it is imperative to understand how such a market functions. That is, if quantum computers are routinely used to efficiently solve \emph{classical} financial problems, then it is feasible that this frequent use of quantum computation could inspire the development of financial assets or instruments that are purely \emph{quantum}. Having a framework for understanding this potential landscape and some of its basic properties is thus beneficial.

Future work could consider other types of quantum games, such as the quantum Prisoners' Dilemma~\cite{eisert1999quantum} and the other types of nonlocal games that we mentioned in~\cref{sec:chsh-market-game} (e.g., the $I_{3322}$ game), extending these frameworks to a financial setting. We also leave to future work some of the finer details in the analysis that we mentioned throughout the text, such as the cost of the market research phase and the effects of uncertainty in estimates for $\eta$ from each trader's perspective. Other additions include changing conditions, such as time-varying market states, or a changing input distribution, which would require adaptive strategies and a richer landscape of learning states or probability distributions. If the value of the quantum assets also changed over time, concepts like discounting by risk-free assets or arbitrage may become relevant as well.

We further leave to future work a quantum market based on the \emph{compatibility} framework, which is in contrast to the exclusivity graph framework considered in this paper. In the compatibility framework, we consider only \emph{compatible} measurements, that is, measurements that commute and, therefore, can be performed at the same time. In this setting, the fundamental object is the \emph{compatibility hypergraph}~\cite[Ch. 2]{amaral2018graph}, which describes the set of measurements that can be done simultaneously, in contrast to the exclusivity graph, which organizes mutually exclusive events. In fact, the exclusivity graph can be derived from the compatibility hypergraph, so it would be interesting to see what generalizations and implications for a quantum financial model one would be able to extract using the compatibility framework.

Finally, we mention the potential inclusion of one or more quantum cryptographic primitives that may give rise to interesting financial phenomena. For example, quantum money~\cite{wiesner1983cojugate,aaronson2009quantum,aaronson2012quantum,farhi2012quantum,pastawski2012unforgeable,mosca2009quantum} schemes could be included in some way, adapting the projectors appropriately to be represented as quantum money states. Then, the market that we introduced inherits the benefits that come with quantum money, most notably the resistance to forgery, which would prevent any market participants from spending assets that they do not actually own. Other primitives that could be useful include bit commitment schemes or digital signatures, each of which can be implemented using, for example, pseudorandom states~\cite{ji2018pseudorandom,ananth2022pseudorandom,ananth2022cryptography,morimae2022quantum,coladangelo2024black-box}. Such primitives would allow for stronger security and a more robust market, where trust is an important ingredient for a properly functioning financial system.

\section*{Acknowledgments}
D.J.S.~acknowledges funding support from a Graduate Research Fellowship from the Joint Quantum Institute (JQI) at the University of Maryland, College Park. B.C., D.-L.V., and P.R.~are supported by the National Research Foundation, Singapore through the National Quantum Office, hosted in A*STAR, under its Centre for Quantum Technologies Funding Initiative and its Advanced Quantum Algorithms and Solutions Funding Initiative. K.B.~is supported by a Hartree fellowship from the Joint Center for Quantum Information and Computer Science (QuICS) at the University of Maryland, College Park. D.J.S.~and A.V.G.~acknowledge support from the U.S.~Department of Energy, Office of Science, Accelerated Research in Quantum Computing, Fundamental Algorithmic Research toward Quantum Utility (FAR-Qu). D.J.S.~and A.V.G.~were also supported in part by the DoE ASCR Quantum Testbed Pathfinder program (award No.~DE-SC0024220), ONR MURI, NSF QLCI (awards No.~OMA-2120757 and No.~OSI-2553574), NSF STAQ program, AFOSR MURI, ARL (W911NF-24-2-0107), and NQVL:QSTD:Design:FTL. D.J.S.~and A.V.G.~also acknowledge support from the U.S.~Department of Energy, Office of Science, National Quantum Information Science Research Centers, Quantum Systems Accelerator (award No.~DE-SCL0000121).

\printbibliography

@article{alcazar2022quantum,
    title={Quantum algorithm for credit valuation adjustments},
    author={Javier Alcazar and Andrea Cadarso and Amara Katabarwa and Marta Mauri and Borja Peropadre},
    journal={New Journal of Physics},
    volume={24},
    pages={023036},
    year={2022},
    doi={10.1088/1367-2630/ac5003}
}

@book{amaral2018graph,
    title={On Graph Approaches to Contextuality and their Role in Quantum Theory},
    author={Barbara Amaral and Marcelo Terra Cunha},
    publisher={Springer},
    year={2018},
    url={https://link.springer.com/book/10.1007/978-3-319-93827-1}
}

@article{an2021quantum-accelerated,
    title={Quantum-accelerated multilevel Monte Carlo methods for stochastic differential equations in mathematical finance},
    author={Dong An and Noah Linden and Jin-Peng Liu and Ashley Montanaro and Changpeng Shao and Jiasu Wang},
    journal={Quantum},
    volume={5},
    pages={481},
    year={2021},
    doi={10.22331/q-2021-06-24-481}
}

@article{bao2023fundamental,
    title={Fundamental theorem for the pricing of quantum assets},
    author={Jinge Bao and Patrick Rebentrost},
    journal={arXiv:2212.13815},
    year={2023},
    doi={10.48550/arXiv.2212.13815}
}

@book{bell2004speakable,
    title={Speakable and Unspeakable in Quantum Mechanics: Collected Papers on Quantum Philosophy},
    author={Bell, J. S. and Aspect, Alain},
    ISBN={9780511815676},
    doi={10.1017/cbo9780511815676},
    publisher={Cambridge University Press},
    year={2004}
}

@article{bharti2022graph-theoretic,
    title={Graph-theoretic approach for self-testing in Bell scenarios},
    author={Kishor Bharti and Maharshi Ray and Zhen-Peng Xu and Masahito Hayashi and Leong-Chuan Kwek and Ad\'{a}n Cabello},
    journal={PRX Quantum},
    volume={3},
    pages={030344},
    year={2022},
    doi={10.1103/PRXQuantum.3.030344}
}

@article{bouland2020prospects,
    author={Adam Bouland and Wim van Dam and Hamed Joorati and Iordanis Kerenidis and Anupam Prakash},
    title={Prospects and challenges of quantum finance},
    journal={arXiv:2011.06492v1},
    year={2020},
    doi={10.48550/arXiv.2011.06492}
}

@article{brunner2014bell,
    title={Bell nonlocality},
    author={Nicolas Brunner and Daniel Cavalcanti and Stefano Pironio and Valerio Scarani and Stephanie Wehner},
    journal={Reviews of Modern Physics},
    volume={86},
    issue={2},
    pages={419},
    year={2014},
    doi={10.1103/RevModPhys.86.419}
}

@article{cabello2014graph,
    title={Graph-theoretic approach to quantum correlations},
    author={Cabello, Ad\'{a}n and Severini, Simone and Winter, Andreas},
    journal={Physical Review Letters},
    volume={112},
    number={4},
    pages={040401},
    year={2014},
    doi={10.1103/PhysRevLett.112.040401},
}

@article{cabello2010noncontextuality,
    title={(Non)-contextuality of physical theories as an axiom},
    author={Adan Cabello and Simone Severini and Andreas Winter},
    journal={arXiv:1010.2163},
    year={2010},
    doi={10.48550/arXiv.1010.2163}
}

@article{cabello2013basic,
    title={Basic exclusivity graphs in quantum correlations},
    author={Ad\'{a}n Cabello and Lars Eirik Danielsen and Antonio J. L\'{o}pez-Tarrida and Jos\'{e} R. Portillo},
    journal={Physical Review A},
    volume={88},
    pages={032104},
    year={2013},
    doi={10.1103/PhysRevA.88.032104}
}

@article{chakrabarti2021threshold,
    author={Shouvanik Chakrabarti and Rajiv Krishnakumar and Guglielmo Mazzola and Nikitas Stamatopoulos and Stefan Woerner and William J. Zeng},
    title={A Threshold for Quantum Advantage in Derivative Pricing},
    journal={Quantum},
    volume={5},
    pages={463},
    year={2021},
    doi={10.22331/q-2021-06-01-463}
}

@article{clauser1969proposed,
    title={Proposed experiment to test local hidden-variable theories},
    author={Clauser, John F. and Horne, Michael A. and Shimony, Abner and Holt, Richard A.},
    journal={Physical Review Letters},
    volume={23},
    issue={15},
    pages={880--884},
    year={1969},
    doi={10.1103/PhysRevLett.23.880}
}

@article{dalzell2022end-to-end,
    title={End-to-end resource analysis for quantum interior point methods and portfolio optimization},
    author={Alexander M. Dalzell and B. David Clader and Grant Salton and Mario Berta and Cedric Yen-Yu Lin and David A. Bader and Nikitas Stamatopoulos and Martin J.A. Schuetz and Fernando G.S.L. Brand{\~a}o and Helmut G. Katzgraber and William J. Zeng},
    journal={PRX Quantum},
    volume={4},
    pages={040325},
    year={2023},
    doi={10.1103/PRXQuantum.4.040325}
}

@article{doriguello2022quantum,
    title={Quantum algorithm for stochastic optimal stopping problems with applications in finance},
    author={Jo{\~{a}}o F. Doriguello and Alessandro Luongo and Jinge Bao and Patrick Rebentrost and Miklos Santha},
    journal={17\textsuperscript{th} Conference on the Theory of Quantum Computation, Communication and Cryptography},
    pages={1--24},
    year={2022},
    doi={10.4230/LIPIcs.TQC.2022.2}
}

@article{egger2021credit,
    title={Credit risk analysis using quantum computers},
    author={Daniel J. Egger and Ricardo Garc\'{i}a Guti\'{e}rrez and Jordi Cahu\'{e} Mestre and Stefan Woerner},
    journal={IEEE Transactions on Computers},
    volume={70},
    number={12},
    year={2021},
    doi={10.1109/TC.2020.3038063}
}

@article{einstein1935can,
    title={Can quantum-mechanical description of physical reality be considered complete?},
    author={Einstein, A. and Podolsky, B. and Rosen, N.},
    journal={Physical Review},
    volume={47},
    issue={10},
    pages={777--780},
    year={1935},
    doi={10.1103/PhysRev.47.777},
}

@book{grotschel2012geometric,
    title={Geometric algorithms and combinatorial optimization},
    author={Gr{\"o}tschel, Martin and Lov{\'a}sz, L{\'a}szl{\'o} and Schrijver, Alexander},
    volume={2},
    year={2012},
    publisher={Springer Science \& Business Media},
    url={https://link.springer.com/book/10.1007/978-3-642-78240-4}
}

@article{herman2023quantum,
    title={Quantum computing for finance},
    author={Dylan Herman and Cody Googin and Xiaoyuan Liu and Yue Sun and Alexey Galda and Ilya Safro and Marco Pistoia and Yuri Alexeev},
    journal={Nature Reviews Physics},
    volume={5},
    pages={450--465},
    year={2023},
    doi={10.1038/s42254-023-00603-1}
}

@article{herman2023constrained,
    title={Constrained optimization via quantum Zeno dynamics},
    author={Dylan Herman and Ruslan Shaydulin and Yue Sun and Shouvanik Chakrabarti and Shaohan Hu and Pierre Minssen and Arthur Rattew and Romina Yalovetzky and Marco Pistoia},
    journal={Communications Physics},
    volume={6},
    number={219},
    year={2023},
    doi={10.1038/s42005-023-01331-9}
}

@article{hughston2024valuation,
    title={Valuation of a financial claim contingent on the outcome of a quantum measurement},
    author={Lane P. Hughston and Leandro S\'{a}nchez-Betancourt},
    journal={Journal of Physics A: Mathematical and Theoretical},
    volume={57},
    number={28},
    pages={285302},
    year={2024},
    doi={10.1088/1751-8121/ad4cab}
}

@article{kerenidis2019quantum,
    title={Quantum algorithms for portfolio optimization},
    author={Iordanis Kerenidis and Anupam Prakash and D\'{a}niel Szil\'{a}gyi},
    journal={AFT '19: Proceedings of the 1st ACM Conference on Advances in Financial Technologies},
    pages={147--155},
    year={2019},
    doi={10.1145/3318041.3355465}
}

@article{khan2025quantum,
    title={Quantum advantage in trading: A game theoretic approach},
    author={Faisal Shah Khan and Norbert M. Linke and Anton Trong Than and Dror Baron},
    journal={Quantum Economics and Finance},
    volume={2},
    issue={1},
    year={2025},
    doi={10.1177/29767032251333418}
}

@article{knuth1993sandwich,
    title={The sandwich theorem},
    author={Donald E. Knuth},
    journal={The Electronic Journal of Combinatorics},
    volume={1},
    year={1994},
    doi={10.37236/1193}
}

@article{leclerc2024financial,
    title={Financial risk management on a neutral atom quantum processor},
    author={Leclerc, Lucas and Ortiz-Guti\'errez, Luis and Grijalva, Sebasti\'an and Albrecht, Boris and Cline, Julia R. K. and Elfving, Vincent E. and Signoles, Adrien and Henriet, Lo\"{i}c and Del Bimbo, Gianni and Sheikh, Usman Ayub and Shah, Maitree and Andrea, Luc and Ishtiaq, Faysal and Duarte, Andoni and Mugel, Sam and C\'aceres, Irene and Kurek, Michel and Or\'us, Roman and Seddik, Achraf and Hammami, Oumaima and Isselnane, Hacene and M'tamon, Didier},
    journal={Physical Review Research},
    volume={5},
    pages={043117},
    year={2023},
    doi={10.1103/PhysRevResearch.5.043117}
}

@article{lim2024quantum,
    title={A quantum online portfolio optimization algorithm},
    author={Debbie Lim and Patrick Rebentrost},
    journal={Quantum Information Processing},
    volume={23},
    number={63},
    year={2024},
    doi={10.1007/s11128-024-04256-6}
}

@article{lovasz1979shannon,
    title={On the Shannon capacity of a graph},
    author={L. Lov\'{a}sz},
    journal={IEEE Transactions on Information Theory},
    volume={25},
    issue={1},
    pages={1--7},
    year={1979},
    doi={10.1109/TIT.1979.1055985}
}

@article{melnyk2008quantum,
    title={Quantum analog of the Black-Scholes formula (market of financial derivatives as a continuous weak measurements)},
    author={S.I. Melnyk and I.G. Tuluzov},
    journal={Electronic Journal of Theoretical Physics (EJTP)},
    volume={5},
    issue={18},
    pages={95--104},
    year={2008},
    url={https://philarchive.org/rec/MELQAO}
}

@article{naik2025portfolio,
    title={From portfolio optimization to quantum blockchain and security: a systematic review of quantum computing in finance},
    author={Abha Naik and Esra Yeniaras and Gerhard Hellstern and Grishma Prasad and Sanjay Kumar and Lalta Prasad Vishwakarma},
    journal={Financial Innovation},
    volume={11},
    pages={88},
    year={2025},
    doi={10.1186/s40854-025-00751-6}
}

@article{orus2019quantum,
    title={Quantum computing for finance: Overview and prospects},
    author={Rom\'{a}n Or\'{u}s and Samuel Mugel and Enrique Lizaso},
    journal={Reviews in Physics},
    volume={4},
    pages={100028},
    year={2019},
    doi={10.1016/j.revip.2019.100028}
}

@article{schaden2002quantum,
    title={Quantum finance},
    author={Martin Schaden},
    journal={Physica A: Statistical Mechanics and its Applications},
    volume={316},
    issue={1--4},
    pages={511--538},
    year={2002},
    doi={10.1016/S0378-4371(02)01200-1}
}

@article{segal1998black-scholes,
    title={The Black-Scholes pricing formula in the quantum context},
    author={Wiliam Segal and I.E. Segal},
    journal={Proceedings of the National Academy of Sciences},
    volume={95},
    issue={7},
    pages={4072--4075},
    year={1998},
    doi={10.1073/pnas.95.7.4072}
}

@article{stamatopoulos2020option,
    title={Option pricing using quantum computers},
    author={Nikitas Stamatopoulos and Daniel J. Egger and Yue Sun and Christa Zoufal and Raban Iten and Ning Shen and Stefan Woerner},
    journal={Quantum},
    volume={4},
    pages={291},
    year={2020},
    doi={10.22331/q-2020-07-06-291}
}

@article{stamatopoulos2024derivative,
    title={Derivative pricing using quantum signal processing},
    author={Nikitas Stamatopoulos and William J. Zeng},
    journal={Quantum},
    volume={8},
    pages={1322},
    year={2024},
    doi={10.22331/q-2024-04-30-1322}
}

@article{supic2020self-testing,
    title={Self-testing of quantum systems: a review},
    author={Ivan \v{S}upi\'{c} and Joseph Bowles},
    journal={Quantum},
    volume={4},
    pages={337},
    year={2020},
    doi={10.22331/q-2020-09-30-337}
}

@article{varian1987arbitrage,
    title={The arbitrage principle in financial economics},
    author={Hal R. Varian},
    journal={Journal of Economic Perspectives},
    volume={1},
    number={2},
    pages={55--72},
    year={1987},
    doi={10.1257/jep.1.2.55}
}

@article{wang2024option,
    title={Option pricing under stochastic volatility on a quantum computer},
    author={Guoming Wang and Angus Kan},
    journal={Quantum},
    volume={8},
    pages={1504},
    year={2024},
    doi={10.22331/q-2024-10-23-1504}
}

@article{woerner2019quantum,
    title={Quantum risk analysis},
    author={Stefan Woerner and Daniel J. Egger},
    journal={npj Quantum Information},
    volume={5},
    number={15},
    year={2019},
    doi={10.1038/s41534-019-0130-6}
}

@article{yalovetzky2023solving,
    title={Solving linear systems on quantum hardware with hybrid HHL++},
    author={Romina Yalovetzky and Pierre Minssen and Dylan Herman and Marco Pistoia},
    journal={Scientific Reports},
    volume={14},
    pages={20610},
    year={2024},
    doi={10.1038/s41598-024-69077-0}
}

@article{arrow1954existence,
    title={Existence of an equilibrium for a competitive economy},
    author={Kenneth J. Arrow and Gerard Debreu},
    journal={Econometrica},
    volume={22},
    number={3},
    pages={265--290},
    year={1954},
    doi={10.2307/1907353}
}

@phdthesis{scarpa2013quantum,
    author={Giannicola Scarpa},
    title={{Quantum Entanglement in Non-local Games, Graph Parameters and Zero-error Information Theory}},
    school={Universiteit van Amsterdam},
    year={2013},
    url={https://eprints.illc.uva.nl/id/eprint/2121/1/DS-2013-03.text.pdf}
}

@article{eisert1999quantum,
    title={Quantum games and quantum strategies},
    author={Jens Eisert and Martin Wilkens and Maciej Lewenstein},
    journal={Physical Review Letters},
    volume={83},
    pages={3077},
    year={1999},
    doi={10.1103/PhysRevLett.83.3077}
}

@article{juhasz1982asymptotic,
    title={The asymptotic behavior of Lov\'{a}sz $\vartheta$ function for random graphs},
    author={Ferenc Juh\'{a}sz},
    journal={Combinatorica},
    volume={2},
    number={2},
    pages={153--155},
    year={1982},
    url={https://real.mtak.hu/35583/8/Juhasz%2C%20F.%20-%20The%20asymptotic%20behaviour%20of%20Lovasz%27%20theta%20function%20for%20random%20graphs.pdf}
}

@article{bollobas1976cliques,
    title={Cliques in random graphs},
    author={B. Bollobas and P. Erd\H{o}s},
    journal={Mathematical Proceedings of the Cambridge Philosophical Society},
    volume={80},
    issue={3},
    pages={419--427},
    year={1976},
    doi={10.1017/S0305004100053056}
}

@article{grimmett1975colouring,
    title={On colouring random graphs},
    author={G.R. Grimmett and C.J.H. McDiarmid},
    journal={Mathematical Proceedings of the Cambridge Philosophical Society},
    volume={77},
    issue={2},
    pages={313--324},
    year={1975},
    doi={10.1017/S0305004100051124}
}

@article{frieze1990independence,
    title={On the independence number of random graphs},
    author={A.M. Frieze},
    journal={Discrete Mathematics},
    volume={81},
    issue={2},
    pages={171--175},
    year={1990},
    doi={10.1016/0012-365X(90)90149-C}
}

@article{matula1970complete,
    title={On the complete subgraphs of a random graph},
    author={D.W. Matula},
    journal={Combinatory Mathematics and its Applications},
    pages={356--369},
    year={1970},
    url={https://s2.smu.edu/~matula/Tech-Report70.pdf}
}

@article{hymas2026quenching,
    title={Quenching speculation in quantum markets via entangled neural traders},
    author={Kieran Hymas and Hiu Ming Lau and Kareem Raslan and Qiang Sun and Azhar Iqbal and Derek Abbott and Andrew D. Greentree and James Q. Quach},
    journal={arXiv:2602.06367},
    year={2026},
    doi={10.48550/arXiv.2602.06367}
}

@article{pavicic2023quantum,
    title={Quantum contextuality},
    author={Mladen Pavicic},
    journal={Quantum},
    volume={7},
    pages={953},
    year={2023},
    doi={10.22331/q-2023-03-17-953}
}

@article{budroni2022kochen-specker,
    title={Kochen-Specker contextuality},
    author={Costantino Budroni and Ad\'{a}n Cabello and Otfried G\"{u}hne and Matthias Kleinmann and Jan-{\AA}ke Larsson},
    journal={Reviews of Modern Physics},
    volume={94},
    pages={045007},
    year={2022},
    doi={10.1103/RevModPhys.94.045007}
}

@article{rebentrost2024quantum,
    title={Quantum Computational Finance: Quantum Algorithm for Portfolio Optimization},
    author={Patrick Rebentrost and Seth Lloyd},
    journal={KI -- K\"{u}nstliche Intelligenz},
    volume={38},
    pages={327--338},
    year={2024},
    doi={10.1007/s13218-024-00870-9}
}

@article{rebentrost2018quantum,
    title={Quantum computational finance: Monte Carlo pricing of financial derivatives},
    author={Patrick Rebentrost and Brajesh Gupta and Thomas R. Bromley},
    journal={Physical Review A},
    volume={98},
    pages={022321},
    year={2018},
    doi={10.1103/PhysRevA.98.022321}
}

@article{herman2026quantum,
    title={Quantum Speedups for Derivative Pricing Beyond Black-Scholes},
    author={Dylan Herman and Yue Sun and Jin-Peng Liu and Marco Pistoia and Charlie Che and Rob Otter and Shouvanik Chakrabarti and Aram Harrow},
    journal={arXiv:2602.03725},
    year={2026},
    doi={10.48550/arXiv.2602.03725}
}

@article{coladangelo2024black-box,
    title={On black-box separations of quantum digital signatures from pseudorandom states},
    author={Andrea Coladangelo and Saachi Mutreja},
    journal={Theory of Cryptography},
    volume={15366},
    pages={289--317},
    year={2024},
    doi={10.1007/978-3-031-78020-2_10}
}

@article{ananth2022pseudorandom,
    title={Pseudorandom (function-like) quantum state generators: new definitions and applications},
    author={Prabhanjan Ananth and Aditya Gulati and Luowen Qian and Henry Yuen},
    journal={Theory of Cryptography},
    pages={237--265},
    year={2022},
    doi={10.1007/978-3-031-22318-1_9}
}

@article{ananth2022cryptography,
    title={Cryptography from pseudorandom quantum states},
    author={Prabhanjan Ananth and Luowen Qian and Henry Yuen},
    journal={Advances in Cryptology---CRYPTO 2022},
    volume={13507},
    pages={208--236},
    year={2022},
    url={10.1007/978-3-031-15802-5_8}
}

@article{ji2018pseudorandom,
    title={Pseudorandom quantum states},
    author={Zhengfeng Ji and Yi-Kai Liu and Fang Song},
    journal={Advances in Cryptology -- CRYPTO 2018},
    pages={126--152},
    year={2018},
    doi={10.1007/978-3-319-96878-0_5}
}

@article{morimae2022quantum,
    title={Quantum commitments and signatures without one-way functions},
    author={Tomoyuki Morimae and Takashi Yamakawa},
    journal={Advances in Cryptology -- CRYPTO 2022},
    pages={269--295},
    year={2022},
    doi={10.1007/978-3-031-15802-5_10}
}

@article{wiesner1983cojugate,
    title={Conjugate coding},
    author={Stephen Wiesner},
    journal={ACM SIGACT News},
    volume={15},
    issue={1},
    pages={78--88},
    year={1983},
    doi={10.1145/1008908.1008920}
}

@article{aaronson2012quantum,
    title={Quantum money from hidden subspaces},
    author={Scott Aaronson and Paul Christiano},
    journal={STOC '12: Proceedings of the forty-fourth annual ACM symposium on Theory of computing},
    pages={41--60},
    year={2012},
    doi={10.1145/2213977.2213983}
}

@article{aaronson2009quantum,
    title={Quantum copy-protection and quantum money},
    author={Scott Aaronson},
    journal={2009 24th Annual IEEE Conference on Computational Complexity},
    year={2009},
    doi={10.1109/CCC.2009.42}
}

@article{farhi2012quantum,
    title={Quantum money from knots},
    author={Edward Farhi and David Gosset and Avinatan Hassidim and Andrew Lutomirski and Peter Shor},
    journal={ITCS '12: Proceedings of the 3rd Innovations in Theoretical Computer Science Conference},
    pages={276--289},
    year={2012},
    doi={10.1145/2090236.2090260}
}

@article{pastawski2012unforgeable,
    title={Unforgeable noise-tolerant quantum tokens},
    author={Fernando Pastawski and Norman Y. Yao and Liang Jiang and Mikhail D. Lukin and J. Ignacio Cirac},
    journal={Proceedings of the National Academy of Sciences},
    volume={109},
    issue={40},
    pages={16079--16082},
    year={2012},
    doi={10.1073/pnas.1203552109}
}

@article{mosca2009quantum,
    title={Quantum coins},
    author={Michele Mosca and Douglas Stebila},
    journal={Error-Correcting Codes, Finite Geometries and Cryptography. Contemporary Mathematics},
    volume={523},
    pages={35--47},
    year={2010},
    doi={10.1090/conm/523}
}

@article{cirelson1980quantum,
    title={Quantum generalizations of Bell's inequality},
    author={B.S. Cirel'son},
    journal={Letters in Mathematical Physics},
    volume={4},
    pages={93--100},
    year={1980},
    doi={10.1007/BF00417500}
}

@article{cleve2004consequences,
    title={Consequences and limits of nonlocal strategies},
    author={Richard Cleve and Peter Hoyer and Benjamin Toner and John Watrous},
    journal={CCC '04: Proceedings of the 19th IEEE Annual Conference on Computational Complexity},
    pages={236--249},
    year={2004},
    doi={10.1109/CCC.2004.1313847}
}

@article{briet2009generalized,
    title={A generalized Grothendieck inequality and entanglement in XOR games},
    author={Jop Bri\"{e}t and Harry Buhrman and Ben Toner},
    journal={arXiv:0901.2009},
    year={2009},
    doi={10.48550/arXiv.0901.2009}
}

@article{collins2004relevant,
    title={A relevant two qubit Bell inequality inequivalent to the CHSH inequality},
    author={Daniel Collins and Nicolas Gisin},
    journal={Journal of Physics A: Mathematical and General},
    volume={37},
    number={5},
    pages={1775},
    year={2004},
    doi={10.1088/0305-4470/37/5/021}
}

@article{svetlichny1987distinguishing,
    title={Distinguishing three-body from two-body nonseparability by a Bell-type inequality},
    author={George Svetlichny},
    journal={Physical Review D},
    volume={35},
    pages={3066},
    year={1987},
    doi={10.1103/PhysRevD.35.3066}
}

@article{horodecki1995violating,
    title={Violating Bell inequality by mixed spin-12 states: necessary and sufficient condition},
    author={R. Horodecki and P. Horodecki and M. Horodecki},
    journal={Physics Letters A},
    volume={200},
    issue={5},
    pages={340--344},
    year={1995},
    doi={10.1016/0375-9601(95)00214-N}
}

@article{werner1989quantum,
    title={Quantum states with Einstein-Podolsky-Rosen correlations admitting a hidden-variable model},
    author={Reinhard F. Werner},
    journal={Physical Review A},
    volume={40},
    pages={4277},
    year={1989},
    doi={10.1103/PhysRevA.40.4277}
}

@article{barrett2002nonsequential,
    title={Nonsequential positive-operator-valued measurements on entangled mixed states do not always violate a Bell inequality},
    author={Jonathan Barrett},
    journal={Physical Review A},
    volume={65},
    pages={042302},
    year={2002},
    doi={10.1103/PhysRevA.65.042302}
}

@article{acin2006grothendiecks,
    title={Grothendieck's constant and local models for noisy entangled quantum states},
    author={Antonio Ac\'{i}n and Nicolas Gisin and Benjamin Toner},
    journal={Physical Review A},
    volume={73},
    pages={062105},
    year={2006},
    doi={10.1103/PhysRevA.73.062105}
}

@article{elitzur1992quantum,
    title={Quantum nonlocality for each pair in an ensemble},
    author={Avshalom C. Elitzur and Sandu Popescu and Daniel Rohrlich},
    journal={Physics Letters A},
    volume={162},
    issue={1},
    pages={25--28},
    year={1992},
    doi={10.1016/0375-9601(92)90952-I}
}

@article{abramsky2017contextual,
    title={Contextual fraction as a measure of contextuality},
    author={Samson Abramsky and Rui Soares Barbosa and Shane Mansfield},
    journal={Physical Review Letters},
    volume={119},
    pages={050504},
    year={2017},
    doi={10.1103/PhysRevLett.119.050504}
}

@article{galli2017lovasz,
    title={On the Lov\'{a}sz theta function and some variants},
    author={Laura Galli and Adam N. Letchford},
    journal={Discrete Optimization},
    volume={25},
    pages={159-174},
    year={2017},
    doi={10.1016/j.disopt.2017.04.001}
}

@article{yalovetzky2026quantum-informed,
    title={Quantum-informed portfolio selection: an end-to-end pipeline validated on trapped-ion hardware with real-market data},
    author={Romina Yalovetzky and Martin J.A. Schuetz and Zichang He and Jiayu Shen and Rudy Raymond and Shauna Sahay and Kishore Perla and Ruben S. Andrist and Grant Salton and Helmut G. Katzgraber and Roger Bongiovanni and Niraj Kumar and Rob Otter},
    journal={arXiv:2607.01037},
    year={2026},
    doi={10.48550/arXiv.2607.01037}
}

@book{follmer2016stochastic,
	Author = {H. F\"ollmer and A. Schied},
	Publisher = {Walter de Gruyter},
	Title = {Stochastic Finance: An Introduction in Discrete Time},
	City = {Berlin},
	Year = {2016}
}

@article{ding2025coordinating,
    title={Coordinating decisions via quantum telepathy},
    author={Dawei Ding and Liang Jiang},
    journal={arXiv:2407.21723},
    year={2025},
    doi={10.48550/arXiv.2407.21723}
}

@article{shor1996fault-tolerant,
    title={Fault-tolerant quantum computation},
    author={Peter W. Shor},
    journal={arXiv:quant-ph/9605011},
    year={1996},
    doi={10.48550/arXiv.quant-ph/9605011}
}

@book{feller1991introduction,
    title={An Introduction to Probability Theory and Its Applications},
    author={William Feller},
    edition={3},
    publisher={Wiley},
    year={1991},
    url={https://www.wiley.com/en-us/shop/general-introductory-statistics/an-introduction-to-probability-theory-and-its-applications-volume-1-3rd-edition-p-9780471257080}
}

@book{oksendal2003stochastic,
    title={Stochastic Differential Equations},
    author={Bernt {\O}ksendal},
    publisher={Springer Berlin, Heidelberg},
    edition={6},
    year={2003},
    doi={10.1007/978-3-642-14394-6}
}

@article{dhaene1999safest,
    title={The safest dependence structure among risks},
    author={Jan Dhaene and Michel Denuit},
    journal={Insurance: Mathematics and Economics},
    volume={25},
    issue={1},
    pages={11--21},
    year={1999},
    doi={10.1016/S0167-6687(99)00009-8}
}

@article{bernstein2017post-quantum,
    title={Post-quantum cryptography},
    author={Daniel J. Bernstein and Tanja Lange},
    journal={Nature},
    volume={549},
    pages={188--194},
    year={2017},
    doi={10.1038/nature23461}
}

@article{vertesi2008more,
    title={More efficient Bell inequalities for Werner states},
    author={T. V\'{e}rtesi},
    journal={Physical Review A},
    volume={78},
    pages={032112},
    year={2008},
    doi={10.1103/PhysRevA.78.032112}
}
\end{document}